\newtheorem{example}{Example}
\newtheorem{theorem}{Theorem}
\newtheorem{lemma}{Lemma}
\newtheorem{proposition}{Proposition}
\newtheorem{corollary}{Corollary}
\tikzstyle{axis}=[thin]
\tikzstyle{state}=[circle,draw=black,thick,fill=black,minimum size=2mm,inner sep=0mm]
\definecolor{mylightgray}{RGB}{207,209,210}
\newcommand{\RR}{\mathbb{R}}
\newcommand{\Ex}{\mathbb{E}}
\newcommand{\matA}{\mathbf{A}}
\newcommand{\matB}{\mathbf{B}}
\newcommand{\matC}{\mathbf{C}}
\newcommand{\matD}{\mathbf{D}}
\newcommand{\matI}{\mathbf{I}}
\newcommand{\matS}{\mathbf{S}}
\newcommand{\matW}{\mathbf{W}}
\newcommand{\matX}{\mathbf{X}}
\newcommand{\matZ}{\mathbf{Z}}
\newcommand{\matLambda}{\bm{\Lambda}}
\newcommand{\vecz}{\mathbf{z}}
\newcommand{\vecx}{\mathbf{x}}
\newcommand{\vecs}{\mathbf{s}}
\newcommand{\vecq}{\mathbf{q}}
\newcommand{\vecpi}{\boldsymbol{\pi}}
\newcommand{\vecPhi}{\bm{\varphi}}
\newcommand{\calS}{\mathcal{S}}
\newcommand{\calP}{\mathcal{P}}
\newcommand{\classNP}{\textsf{NP}}
\newcommand{\D}{\displaystyle}
\title{Persuading Agents in Opinion Formation Games}
\author{Martin Hoefer\thanks{Dept. of Computer Science, RWTH Aachen University. \texttt{mhoefer@cs.rwth-aachen.de}} \and Tim Koglin\thanks{Dept. of Computer Science, RWTH Aachen University. \texttt{koglin@algo.rwth-aachen.de}} \and Tolga Tel\thanks{Institute for Computer Science, Goethe University Frankfurt. \texttt{tel@em.uni-frankfurt.de}}}
\date{\today}
\begin{document}

\maketitle

\begin{abstract}
        Prominent opinion formation models such as the one by Friedkin and Johnsen (FJ) concentrate on the effects of peer pressure on public opinions. In practice, opinion formation is also based on information about the state of the world and persuasion efforts. In this paper, we analyze an approach of Bayesian persuasion in the FJ model. There is an unknown state of the world that influences the preconceptions of $n$ agents. A sender $\calS$ can (partially) reveal information about the state to all agents. The agents update their preconceptions, and an equilibrium of public opinions emerges. We propose algorithms for the sender to reveal information in order to optimize various aspects of the emerging equilibrium.

    For many natural sender objectives, we show that there are simple optimal strategies. We then focus on a general class of range-based objectives with desired opinion ranges for each agent. We provide efficient algorithms in several cases, e.g., when the matrix of preconceptions in all states has constant rank, or when there is only a polynomial number of range combinations that lead to positive value for $\calS$. This generalizes, e.g., instances with a constant number of states and/or agents, or instances with a logarithmic number of ranges. In general, we show that subadditive range-based objectives allow a simple $n$-approximation, and even for additive ones, obtaining an $n^{1-\varepsilon}$-approximation is \classNP-hard, for any constant $\varepsilon > 0$. 
\end{abstract}
\pagenumbering{gobble}

\clearpage 

\pagenumbering{arabic}

\section{Introduction}
Opinion formation has substantial impact on many aspects of society. With the advent of the internet, computational aspects of opinion formation have become a focus in computer science, especially at the intersection of algorithms and game theory. Prominent opinion formation models from mathematical sociology such as the Friedkin-Johnsen (FJ) model \cite{friedkin1990social} capture notions of peer pressure and the discrepancy between (internal) preconceptions and publicly stated opinions. 

Over the last three decades, an enormous body of work has addressed numerous aspects and extensions of the FJ model and related variants. Interestingly, to the best of our knowledge, natural and straightforward aspects of \emph{persuasion} have not been studied in terms of algorithm design and computational complexity (with the notable exception of Abebe et al.~\cite{abebe2021opinion}), even though these aspects play a crucial role in the formation of opinions. Applications of persuasion include marketing efforts to influence the opinions of potential customers about the quality, usefulness, or social reputation of a certain product. 

For instance, consider a vendor who wants to sell their latest product. The product quality is known to be either high or low, e.g., due to yet volatile manufacturing conditions, with certain probabilities. The potential customers form their individual opinion on the product quality based on their own opinion and the opinions of their peers. Each of them buys the product only if they believe the quality to be high enough. The vendor can make (arbitrary) public announcements to (partially) reveal information about the quality to all potential customers. How should the vendor disclose information in order to sell the product to as many potential customers as possible? 

Similar challenges arise in political campaigns that aim to change the public opinions on a candidate who runs for office. In these applications, there is usually a party (e.g., a marketing agency or interest group) that tries to persuade the public by publishing information in order to influence the opinions of agents in the society on the candidate.

In this paper, we formally model and analyze such persuasion scenarios. More concretely, we consider algorithms for persuasion problems in the FJ model. Our approach lies within the prominent framework of \emph{Bayesian persuasion}~\cite{kamenica2011bayesian}, also termed \emph{information design} or \emph{signaling}. There is a set $\Theta$ of possible states of the world (e.g., whether a product has high or low quality) along with a prior distribution over states. The state of the world influences the preconception of each agent. There is an informed \emph{sender} (e.g., a vendor or marketing company). The sender can see the realized state, while the agents only know the prior. The sender can then (partially) reveal information about the realized state by sending a public signal to all agents. We assume the sender has \emph{commitment power}, i.e., she commits in advance how she sends signals and communicates this \emph{signaling scheme} to the agents. Upon receiving a signal, the agents then Bayes-update their preconceptions according to the conditional distribution over states. Public opinions emerge based on the dynamics inherent in the FJ model. We consider the optimization problems faced by the sender, i.e., designing a signaling scheme that maximizes different aspects of the emerging equilibrium.

On a technical level, our model is closely related to approaches for \emph{persuading voters} that have been of interest over the last decade~\cite{alonso2016persuading,cheng2015mixture,castiglioni2020persuading}. Here, the agents are voters and need to elect a candidate. The utility function of each voter depends on the state of the world. Upon receiving a signal from the sender, each voter chooses the candidate that maximizes the agents' conditional expected utility. The goal of the sender is to, e.g., maximize the probability that a given candidate wins the election, or maximize the expected number of votes for a given candidate. Our model strictly generalizes the elementary voting model with two candidates. Notably, persuasion in the FJ model introduces additional challenges of \emph{externalities} between the agents which are often assumed to be absent in existing work on voter persuasion.

\paragraph{Contribution and Outline.}
In Section~\ref{sec:model}, we describe the precise details of the mathematical model, discuss the relation to voter persuasion and prove some preliminary technical properties. In Section~\ref{sec:simple}, we analyze optimal persuasion for many natural objectives for the sender, including minimizing or maximizing polarization or disagreement in the equilibrium, or norm distance of the equilibrium from a vector of desired opinions. Indeed, all these objectives can be optimized by simple and natural schemes of either revealing full information about the state of the world or no information at all. 
Hence, our results indicate that there are many ``easy'' objectives for persuasion in practice, which allow optimal signaling with very minor conceptual and computational overhead. On a technical level, we observe that many natural objective functions are convex. This allows a connection to pivotal results in the persuasion literature~\cite{kamenica2011bayesian} for the optimality of simple signaling schemes. 

In Section~\ref{sec:range}, we study a computationally more challenging class of functions that we term \emph{range-based objectives}. For such an objective, we are a given a set of intervals (or \emph{ranges}) for each agent. The objective then yields a value based on 
which equilibrium opinion is located in which subset of ranges. Arguably, the easiest such objective is just counting the \emph{number} of ranges in which the equilibrium opinions are located. These objectives extend the standard ones studied for voter persuasion (maximize the number of votes for a candidate, maximize the probability that a candidate is elected). For range-based objectives, we consider two novel algorithms based on cell decomposition approaches. They apply when (1) the states of the world give rise to constantly many linearly independent vectors of preconceptions (Section~\ref{sec:rank}), and (2) the range-based objective has only polynomially many non-zero values (Section~\ref{sec:nonzero}). These scenarios include many natural cases, such as a constant number of agents and/or states, preconceptions that are linearly dependent, or maximizing the probability that all (or a given subset of) agents are located in a particular opinion range. Our results also identify new tractable cases for voter persuasion with two candidates.

In general, we show in Section~\ref{sec:general} that finding a good signaling scheme is \classNP-hard, even maximizing the number of opinions in the ranges. It is \classNP-hard to approximate within a factor of $n^{1-\varepsilon}$, where $n$ is the number of agents and  $\varepsilon > 0$ is an arbitrary constant. We show a simple and matching $n$-approximation for subadditive range-based objectives by applying our results from Section~\ref{sec:nonzero}. We optimize the signaling scheme in terms of the objective for each individual agent and then use one of these $n$ schemes that yields the best objective function value overall. 

In Section~\ref{sec:extensions}, we elaborate that our results can be applied to analyze signaling in other well-studied variants of the FJ model, such as multi-dimensional opinions \cite{parsegov2016novel}. Moreover, we show improved results for signaling in the classic French-DeGroot model~\cite{degroot1974reaching,French1956}, where preconceptions are absent and agents reach a consensus. In particular, our results from Section~\ref{sec:rank} can be adapted to compute optimal signaling schemes for range-based objectives in polynomial time.

\paragraph{Related Work.}

Our paper studies Bayesian persuasion in the FJ model. Both the literature on persuasion and on the FJ model are too vast to survey here. We concentrate on a selected subset of works that are most closely related.

The study of Bayesian persuasion was popularized by Kamenica and Gentzkow \cite{kamenica2011bayesian} and later adopted prominently by Bergemann and Morris~\cite{bergemann2016information,bergemann2019information}. Following Dughmi and Xu~\cite{DughmiX21}, Bayesian persuasion has attracted significant interest in computer science. 

Closest to our approach is previous work on persuading voters initiated by Alonso and C\^amera~\cite{alonso2016persuading}. Here, an individual (politician) can influence voters' choices by designing a public signal. The authors characterized the optimal public signal where the individual designs a public signal with realizations targeting different winning coalitions. Designing optimal public signals in general multi-receiver settings is known to be hard in terms of approximation~\cite{Rubinstein17,Dughmi19}. This has also been shown for an elementary voting scenario by Cheng et al.~\cite{cheng2015mixture} in the context of a more general approach of mixture selection. For a suitably normalized instance, an approximation algorithm with additive guarantees is given when the instances satisfies conditions on Lipschitz continuity and noise stability.

Several extensions of the elementary voting model have found interest. Chan et al.\cite{chan2019pivotal} show that in a voting context a sender can exploit the heterogeneity in voting costs by privately communicating with the voters. Multiple candidates and district-based voting are studied in \cite{castiglioni2020persuading,castiglioni2021persuading} with private, public and semi-public signaling schemes. These results are complementary to \cite{arieli2019private,xu2020tractability,CastiglioniCG23}, who provide fixed-parameter tractable and (bi-criteria) approximation algorithms for private and public signals in multi-receiver persuasion without externalities. More generally, our work extends the algorithmic study of persuasion with multiple agents and externalities, which has received interest in a number of applications including, e.g., auctions~\cite{BacchiocchiCMRG22,CastiglioniMRG23,EmekFGLT14,MiltersenS12} or traffic routing~\cite{ZhouNX22,CastiglioniCMG21,BhaskarCKS16,GriesbachHKK22,GriesbachHKK24}

In terms of opinion formation models, we build on the classic Friedkin-Johnsen model for continuous opinions~\cite{friedkin1990social}. Convergence to an equilibrium and its dependence on the network structure is analyzed by Ghaderi and Srikant~\cite{ghaderi2014opinion}. These results extend to multi-dimensional opinions \cite{parsegov2016novel}. For a survey of structural features of opinion formation models (including the FJ model), we refer to Proskurnikov and Tempo \cite{ProskunikovT2017}.

Most closely related to our work is work on persuasion in the FJ model by Abebe et al.~\cite{abebe2021opinion}, where a designer can modify the susceptibility of agents. The authors propose a framework in which the self-confidence parameter of the agents should be optimized, thereby changing the extent each agent suffers from peer-pressure. This approach is quite different from our model, in which there are different states of the world, and a sender can exploit stochastic uncertainty to influence the resulting opinions of the agents.

As part of the objective of the sender, we are interested in structural aspects of the emerging opinions including polarization. Algorithms to manipulate the network structure to reduce mutual disagreement and polarization in the resulting equilibrium are analyzed by Musco et al.~\cite{musco2018minimizing}. More generally, computational and structural aspects of opinion formation models have been a very popular topic in computer science over the last decade. There is a large body of work on, e.g., discrete opinion spaces~\cite{ChierichettiKO18,AulettaCFGP16,MeiHCBD24}, consensus properties~\cite{AulettaFG18,FerraioliV17}, manipulation of individual preconceptions~\cite{Sun2023optimization}, limited information exchange between agents~\cite{FotakisKKS18}, combinations with voting~\cite{BerenbrinkHKMRW25,EpitropouFHS19,BredereckGIK22} and more.

\section{Model and Preliminaries}
\label{sec:model}

We consider the classic FJ opinion formation model as formulated in~\cite[Chapter 6]{ProskunikovT2017}. There is a set $V$ of $n$ agents embedded in a social influence network. Network influence is expressed by a non-negative, stochastic $n \times n$-matrix $\matA$, where entry $a_{uv} \ge 0$ yields the influence of $v$ on $u$.

Each agent $u \in V$ has a \emph{preconception} $s_u \in [0,1]$, which is static. The \emph{stubbornness}, captured by $(1-\lambda_u)$ for $\lambda_u \in [0,1]$, indicates to which extent $u$ prefers to stick to their preconception. Thus, $\lambda_u$ states the \emph{susceptibility} of $u$ to social influence in the process.

Agent $u$ expresses a \emph{public opinion} that is governed by a mix of stubbornness and peer pressure. The formation process of public opinions starts at $z_u^{(0)} = s_u$ for each $u \in V$, and then evolves via simultaneous updates in discrete rounds $t=1,2,3\ldots$ according to
\begin{equation}
    \label{eq:FJ_update}
    z_u^{(t+1)} = (1-\lambda_u) s_u +  \lambda_u \sum_{v \in V} a_{uv} z_v^{(t)} \, .
\end{equation}
Using $\vecz^{(t)} = (z_1^{(t)},\ldots,z_n^{(t)})^\top$, $\vecs = (s_1,\ldots,s_n)^\top$ and $\matLambda = diag(\lambda_1,\ldots,\lambda_n)$, the update step is given by $\vecz^{(t+1)} = (\matI - \matLambda) \vecs + \matLambda \matA \vecz^{(t)} = \matD\vecs  + \matW \vecz^{(t)}$, where we use $\matD = \matI - \matLambda$ and $\matW = \matLambda\matA$.

Provided that $\matW$ and $\matD$ satisfy certain structural conditions, the process is guaranteed to converge to a unique equilibrium. For a discussion of these conditions, we refer the reader to Proskurnikov and Tempo~\cite{ProskunikovT2017}. The equilibrium is then given by $\vecz = \lim_{t \to \infty} \vecz^{(t)} = \sum_{t=0}^\infty \matW^{t} \matD \vecs = (\matI-\matW)^{-1} \matD \vecs$.
To enable a meaningful prediction of the emerging behavior of the agent population, we focus on instances that allow for convergence.

We consider \emph{Bayesian persuasion} in the FJ model, which we term an \emph{FJ signaling game}. There is a set $\Theta$ of $m$ different states of nature. State $\theta \in \Theta$ determines the preconceptions of the agents. Each agent $u \in V$ has a preconception $s_{u,\theta} \in [0,1]$, for each $\theta \in \Theta$. We denote the $(n \times m)$-matrix of preconceptions by $\matS$. There is a prior distribution over the states. We use $\vecq = (q_1,\ldots,q_m)^\top$ to denote the vector of probabilities. The set of states $\Theta$, the distribution $\vecq$, and her vector of potential preconceptions $\vecs_u$ are known to agent $u \in V$ upfront. However, $u$ is unable to see the realization of the random state.

There is a sender $\calS$ (say, a political campaign, or a marketing company) who observes the realized state $\theta$. $\calS$ can (partially) disclose information about $\theta$ to the agents. In particular, there is a set $\Sigma$ of abstract \emph{signals}. After observing the realized state, $\calS$ sends one (public) signal $\sigma$ to all agents. Seeing $\sigma$, the agents perform a Bayes update of their preconceptions and then initiate the FJ opinion formation process. 

Formally, we assume $\calS$ has commitment power. Thus, $\calS$ initially commits to a \emph{signaling scheme} $\vecPhi$. It is a matrix that contains a probability $\varphi_{\theta,\sigma} \in [0,1]$ to send signal $\sigma$ when the realized state is $\theta$, for each $\theta \in \Theta$, $\sigma \in \Sigma$. $\calS$ communicates $\vecPhi$ to all agents. Then the state $\theta$ is realized and observed by $\calS$. Due to commitment, $\calS$ is bound to send signals according to distribution $\vecPhi_{\theta}$. Seeing signal $\sigma$ (but not knowing $\theta$), each agent $u$ performs a Bayes update on the distribution over states. This yields an expected preconception
\[
    s_{u,\sigma} = \frac{\sum_{\theta \in \Theta} s_{u,\theta} x_{\theta,\sigma} }{\sum_{\theta \in \Theta} x_{\theta,\sigma}} \, ,
\]
where we use $x_{\theta,\sigma} = \varphi_{\theta,\sigma} \cdot q_\theta$ to denote the combined probability that state $\theta$ is realized and signal $\sigma$ gets sent. Let $\matX = (x_{\theta,\sigma})_{\theta \in \Theta, \sigma \in \Sigma}$ denote the matrix and $\vecx_{\sigma} = (x_{\theta,\sigma})_{\theta \in \Theta}$ the column vector for signal $\sigma$. The Bayes update and the emerging expected preconceptions can be expressed by 
\begin{equation}
    \label{eq:privateBayes}
    \vecs_{\sigma} = \frac{\matS\vecx_{\sigma}}{\| \vecx_{\sigma} \|_1} \, ,
\end{equation}
where $\|\vecx_\sigma\|_1$ represents the probability that signal $\sigma$ gets sent, and $\vecx_{\sigma}/\| \vecx_{\sigma} \|_1$ the posterior distribution over states induced by $\sigma$. Given the expected preconceptions $\vecs_{\sigma}$, the FJ formation process then converges to the equilibrium $\vecz_{\sigma} = (\matI - \matW)^{-1} \matD \vecs_{\sigma}$. For convenience, we define matrix $\matZ = (\matI-\matW)^{-1}\matD \matS$ of all equilibrium opinions $z_{u,\theta}, u \in V, \theta \in \Theta$, that evolve when $\calS$ uses ``full revelation'' signaling, i.e., sends in each state $\theta_i$ an individual signal $\sigma_i$ with probability 1. We term $\matZ$ the \emph{full revelation matrix}.

\begin{example} \rm We discuss the introductory example in more detail to illustrate the model and definitions. Consider a vendor $\calS$ who tries to convince two potential customers to buy her product. We cast the scenario as an FJ signaling game with $V = \{u,v\}$ and two directed edges with $a_{uv} = a_{vu} = 1$. Let $\lambda_u = \lambda_v = 0.5$. %Let $w_{uu}=w_{vv}=1$. 
Suppose there are two states $\Theta = \{\theta_1 , \theta_2\}$ with $\vecq = (0.5,0.5)^\top$. In state $\theta_1$ the product has low quality, in $\theta_2$ it has high quality. The preconceptions are $\vecs_{\theta_1} = (0,0.3)^\top$ and $\vecs_{\theta_2}=(1,0.7)^\top$. The numbers express the level of approval, i.e., both agents prefer high quality, and agent 2 has a more moderate opinion of low quality. Now $\calS$ decides on a signaling scheme. After seeing the state, $\calS$ sends a signal to both agents, who Bayes-update their preconceptions (i.e., their approval rating of the product). Then the agents interact and reach an equilibrium $\vecz$.

Suppose $\calS$ uses a ``no-signaling'' scheme, in which she always sends the same signal $\sigma$. Then $\vecx_\sigma = \vecq$ and $\|\vecx_\sigma\|_1 = 1$, i.e., the posterior distribution is the prior. The expected preconceptions become $\vecs_\sigma = (0.5,0.5)^\top$, which is also the resulting equilibrium $\vecz_\sigma = \vecs_\sigma$. 

Instead, suppose $\calS$ uses full revelation. Then the posterior distributions are $\vecx_{\sigma_1}/\|\vecx_{\sigma_1}\|_1 = (1,0)^\top$ and $\vecx_{\sigma_2}/\|\vecx_{\sigma_2}\|_1 = (0,1)^\top$. The preconceptions are $\vecs_{\sigma_1} = \vecs_{\theta_1} = (0,0.3)^\top$ and $\vecs_{\sigma_2} = \vecs_{\theta_2} = (1,0.7)^\top$. The emerging equilibria are given by the full revelation matrix $\matZ$ with columns $\vecz_{\theta_1} = (0.1,0.2)^\top$ and $\vecz_{\theta_2} = (0.9,0.8)^\top$. Each state/signal occurs with probability $\|\vecx_{\sigma_1}\|_1 = q_{\theta_1} = \| \vecx_{\sigma_2}\|_1 = q_{\theta_2} = 0.5$. \hfill $\blacktriangle$
\end{example}

We consider the problem to design $\vecPhi$ (and, hence, $\matX$) in order to induce a distribution over equilibrium outcomes $\vecz_{\sigma}$. We consider optimizing several natural objective functions to evaluate the equilibria $\vecz_\sigma$. Given an objective $f(\vecz_\sigma)$, we strive to find $\vecPhi$ that optimizes $\Ex[f(\vecz_\sigma)] = \sum_{\sigma} \|\vecx_\sigma\|_1 f(\vecz_\sigma)$. We consider the following classes of objectives.

\paragraph{Expected Norm Distance.}
$\calS$ has a vector $\vecz^*$ with a desired (undesired) target opinion $z_u^*$ for each agent $u \in V$. $\calS$ strives to minimize (maximize) the expected $p$-norm distance, for some $p \in \{1,2,\ldots,\infty\}$, from this target vector, i.e., $f_{nd}(\vecz_{\sigma}) = \|\vecz_{\sigma} - \vecz^* \|_p$.

\paragraph{Polarization and Disagreement Indices.} $\calS$ strives to minimize (maximize) the \emph{polarization index} $f_{p}(\vecz_\sigma) = \sum_{u \in V} (z_{u,\sigma} - \overline{\vecz_{\sigma}})^2$, where $\overline{\vecz_\sigma} = \frac{1}{n}\sum_{u \in V} z_{u,\sigma}$ is the average opinion, or the \emph{disagreement index} $f_d(\vecz_\sigma) = \sum_{(u,v) \in E} w_{uv}(z_{u,\sigma} - z_{v,\sigma})^2$ ~\cite{musco2018minimizing}. We also consider the natural variants \emph{max-polarization} and \emph{max-disagreement}, in which $\calS$ minimizes (maximizes) $f_{mp}(\vecz_\sigma) = \max_{u,v \in V} | z_{u,\sigma} - z_{v,\sigma} |$ or $f_{md}(\vecz_\sigma) = \max_{(u,v) \in E} | z_{u,\sigma} - z_{v,\sigma} |$.

\paragraph{Desired Ranges.}
$\calS$ has desired opinion ranges for each agent. There is a finite set $D_u$ of intervals (or \emph{ranges}) for each $u \in V$. $\calS$ strives to maximize (minimize) the expected number of agents with an opinion in one of its ranges, i.e., $f_{na}(\vecz_{\sigma}) = |\{ u \mid \text{there is } [a,b] \in D_u \text{ with } z_{u,\sigma} \in [a,b]\}|$.

Beyond counting the \emph{number of agents} whose opinion is in one of their desired intervals, we study a substantially more general class of \emph{range-based objectives}. A range-based objective is based on a \emph{monotone\footnote{$g$ is monotone if $g(T') \le g(T)$ for every $T' \subseteq T \subseteq D$.} set function} $g : 2^D \to \RR_{\ge 0}$, where $D = \{ i_{u,[a,b]} \mid u \in V, [a,b] \in D_u\}$ is a set of indicator elements if agent $u$ is located in opinion range $[a,b] \in D_u$. $\calS$ strives to maximize $f(\vecz_\sigma) = g(T_\sigma)$, where $T_\sigma = \{ i_{u,[a,b]} \mid u \in V, z_{u,\sigma} \in [a,b] \in D_u \}$. We denote by $k = |D|$ the total number of ranges for all agents.

\setcounter{example}{0}
\begin{example}{(continued)} \rm
    Reconsider the example above. Suppose the target opinions for $\calS$ are $\vecz^* = (0.7,0.7)^\top$, and $\calS$ strives to minimize a 2-norm-distance. For the no-signaling scheme, we obtain a cost of $1 \cdot \| (0.5,0.5)^\top - (0.7,0.7)^\top\|_2 = \sqrt{0.08}\approx 0.282$. For full revelation, the cost is $0.5\cdot\| (0.1,0.2)^\top - (0.7,0.7)^\top\|_2 + 0.5\cdot\| (0.9,0.8)^\top - (0.7,0.7)^\top\|_2 = 0.5\cdot\sqrt{0.61}+0.5\cdot\sqrt{0.05}\approx 0.502$. Hence, no-signaling is the better scheme. We show below that it is even an \emph{optimal} scheme for $\calS$.

    Alternatively, suppose each agent buys the product if his approval exceeds $0.6$, i.e., the opinion range desired by $\calS$ is $[0.6,1]$. $\calS$ wants to maximize the probability that both agents buy the product. Generally, for any signal $\sigma$ we have
    \[
    \vecz_\sigma = (\matI -\matW)^{-1} \matD \frac{\matS \vecx_\sigma}{\|\vecx_\sigma\|_1} = \left(\begin{array}{cc} 0.1 & 0.9 \\ 0.2 & 0.8\end{array}\right) \left(\begin{array}{c} \nicefrac{x_{\theta_1,\sigma}}{\| \vecx_\sigma\|_1}\\ \nicefrac{x_{\theta_2,\sigma}}{\| \vecx_\sigma \|_1} \end{array}\right) \, .
    \]
    An optimal scheme is given by, e.g., $x_{\theta_1,\sigma_1} = 1/4$ and $x_{\theta_2,\sigma_1} = 1/2$ and $x_{\theta_1,\sigma_2} = 1/4$ and $x_{\theta_2,\sigma_2} = 0$. This yields $\vecz_{\sigma_1} = (0.6\overline{3},0.6)^\top$ and $\vecz_{\sigma_2} = (0.1,0.2)^\top$, as well as $\|\vecx_{\sigma_1}\| = 0.75, \| \vecx_{\sigma_2}\| = 0.25$. For signal $\sigma_1$, both agents are in the desired interval. For signal $\sigma_2$, no agent is in the desired interval. Thus, in expectation, 1.5 agents are in the desired interval. Hence, the vendor can sell in expectation $75\%$ of her products, even though people were apriori interested in only $50\%$ of the products (with high quality). \hfill $\blacksquare$
    
\end{example}

\paragraph{Persuading Voters.} FJ signaling games with range-based objectives is a broad generalization of the elementary scenario of persuading voters~\cite{alonso2016persuading,cheng2015mixture}, where the agents are voters. There is an issue to be decided upon via majority approval. Each state determines a utility pair for each agent, with a utility for a yes- or no-decision. The sender sends a signal, voters Bayes-update their preconceptions, and then vote based on which decision would maximize their expected utility. The final decision is given by the majority of votes. A goal of the sender is, e.g., to maximize the probability of a yes-decision in the end. 

Via standard normalization, one can assume that in each state, each voter has utility $0.5$ for no and some utility in $[0,1]$ for yes. Thus, $\calS$ strives to maximize the probability that at least $n/2$ agents have a utility of at least $0.5$ for yes. This scenario can be expressed as an FJ signaling game using $E = \emptyset$ and $s_{u,\theta} \in [0,1]$ as the utility for yes. We set $D_u = \{[0.5,1]\}$ for all $u \in V$, as well as $g(T) = 1$ if $|T| \ge n/2$ and $0$ otherwise. We maximize $g$ by maximizing the probability that at least $n/2$ agents vote yes.

\paragraph{Preliminaries.}
To conclude the section, we discuss preliminaries that will be useful below.

\begin{lemma} \label{lem:equilib}
    Consider any FJ signaling game and its full revelation matrix $\matZ$. There is an equivalent game with the same states $\Theta$ and prior distribution $\vecq$, in which preconceptions $s_{u,\theta} = z_{u,\theta}$, for all $u \in V$ and $\theta \in \Theta$, as well as $\matD = \matI$ and $\matW = \bm{0}$.
\end{lemma}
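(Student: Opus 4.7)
The plan is to construct the equivalent game explicitly and then verify that any signaling scheme $\vecPhi$ (equivalently, any $\matX$) induces the same distribution over equilibrium opinion vectors in both games. For the purposes of persuasion, equivalence amounts exactly to this, since the sender's objective $\Ex[f(\vecz_\sigma)] = \sum_\sigma \|\vecx_\sigma\|_1 f(\vecz_\sigma)$ depends only on that distribution.

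First, I would define the new game by keeping $\Theta$ and $\vecq$, replacing the preconception matrix $\matS$ by $\matZ$, and setting $\matD' = \matI$, $\matW' = \bm{0}$. A brief sanity check is needed to ensure that the new preconceptions $z_{u,\theta}$ lie in $[0,1]$. Since $\matA$ is row-stochastic and $\matLambda$ is diagonal with entries in $[0,1]$, we have $\matD\mathbf{1} + \matW\mathbf{1} = \mathbf{1}$, so $\matD\mathbf{1} = (\matI - \matW)\mathbf{1}$ and hence $(\matI-\matW)^{-1}\matD\,\mathbf{1} = \mathbf{1}$. Combined with the fact that $(\matI-\matW)^{-1} = \sum_{t\ge 0} \matW^t$ has nonnegative entries (using the convergence assumption), $(\matI-\matW)^{-1}\matD$ is row-stochastic. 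Thus each $z_{u,\theta}$ is a convex combination of the original preconceptions $s_{v,\theta} \in [0,1]$ and is itself in $[0,1]$.

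Next, I fix an arbitrary signaling scheme $\matX$ and compare the equilibrium after signal $\sigma$ in the two games. In the original game, using \eqref{eq:privateBayes} and linearity,
\[
\vecz_\sigma \;=\; (\matI-\matW)^{-1}\matD \, \vecs_\sigma \;=\; (\matI-\matW)^{-1}\matD \matS \,\frac{\vecx_\sigma}{\|\vecx_\sigma\|_1} \;=\; \matZ \,\frac{\vecx_\sigma}{\|\vecx_\sigma\|_1}.
\]
In the new game, $\matW' = \bm{0}$ and $\matD' = \matI$ reduce the dynamics to $\vecz'_\sigma = \vecs'_\sigma$, and the Bayes update with preconception matrix $\matZ$ gives $\vecs'_\sigma = \matZ\,\vecx_\sigma/\|\vecx_\sigma\|_1$. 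Hence $\vecz'_\sigma = \vecz_\sigma$ for every $\sigma$. Moreover, the probability $\|\vecx_\sigma\|_1$ of sending signal $\sigma$ depends only on $\vecq$ and $\vecPhi$, which are identical in both games. Therefore the joint distribution of $(\sigma, \vecz_\sigma)$ is the same, proving equivalence.

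At its core the argument is a one-line manipulation: the Bayes update and the FJ fixed-point operator are both linear in $\vecs$, so they commute, and absorbing the latter into the preconceptions trivializes the dynamics. The only real step of substance is the feasibility check that $z_{u,\theta} \in [0,1]$, and I do not anticipate any obstacle beyond the stochasticity computation sketched above.
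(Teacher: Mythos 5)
Your proof is correct and follows essentially the same route as the paper: both arguments observe that the Bayes update and the FJ fixed-point map are linear in the preconceptions, so $\vecz_\sigma = \matZ\vecx_\sigma/\|\vecx_\sigma\|_1$ in either game, which makes absorbing $(\matI-\matW)^{-1}\matD$ into the preconception matrix immediate. Your extra check that $(\matI-\matW)^{-1}\matD$ is row-stochastic, so the new preconceptions $z_{u,\theta}$ remain in $[0,1]$, is a legitimate detail the paper's proof silently omits, and it is carried out correctly.
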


\begin{proof}
    Consider any signaling scheme $\vecPhi$. For each signal $\sigma \in \Sigma$ we obtain the vector of preconceptions $\vecs_{\sigma}$ from \eqref{eq:privateBayes}. The resulting equilibrium opinions are
    \begin{align}
    \label{eq:opinionDecomp}
        \vecz_\sigma &= 
        (\matI - \matW)^{-1} \matD \, \frac{\matS \vecx_{\sigma}}{\| \vecx_{\sigma}\|_1} = \frac{\matZ \vecx_{\sigma}}{\| \vecx_{\sigma}\|_1} \, .
    \end{align}
 The equilibrium opinions $\vecz_\sigma$ can be interpreted as a linear combination of the equilibrium opinions in $\matZ$. Considering the equivalent game with preconceptions $\matS = \matZ$, $\matW = \bm{0}$ and $\matD = \matI$, we see that $\matZ$ trivially remains the full revelation matrix. Moreover, by~\eqref{eq:opinionDecomp} the equilibrium opinions $\vecz_\sigma$ remain the same, for any signaling scheme $\varphi$. 
\end{proof}

\begin{lemma}
    \label{lem:rank}
    The matrices $\matS$ and $\matZ$ have the same rank.
\end{lemma}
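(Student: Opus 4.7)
The plan is to exploit the explicit formula $\matZ = (\matI-\matW)^{-1}\matD\,\matS$ from the definition of the full revelation matrix. Setting $\matB = (\matI-\matW)^{-1}\matD$, we have $\matZ = \matB\matS$, so the question reduces to understanding how $\matB$ transforms ranks. The inequality $\text{rank}(\matZ) \le \text{rank}(\matS)$ is immediate from the fact that the column space of $\matB\matS$ is the image of the column space of $\matS$ under the linear map $\matB$.

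For the reverse inequality, the cleanest route is to show that $\matB$ itself is invertible, since multiplication by an invertible matrix preserves rank. This factors as a product of two claims: $(\matI-\matW)^{-1}$ exists (hence is invertible) by the very convergence assumption on the FJ dynamics that we rely on throughout the paper; and $\matD = \matI - \matLambda$ is diagonal with entries $1-\lambda_u$, which is invertible precisely when $\lambda_u < 1$ for every agent $u \in V$. This is the standard condition underlying the convergence results cited from Proskurnikov and Tempo~\cite{ProskunikovT2017}, so it is built into our standing assumption that the game admits a well-defined equilibrium for every signal. Concluding, $\matB$ is a product of two invertible matrices, hence invertible, so $\text{rank}(\matZ) = \text{rank}(\matB\matS) = \text{rank}(\matS)$.

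The only delicate point is the invertibility of $\matD$: if one allowed $\lambda_u = 1$ for some agent (a completely non-stubborn agent), then that row of $\matS$ would be entirely washed out and the ranks could differ. I would therefore state explicitly, either in the proof or by recalling our convergence hypothesis, that the usual FJ condition guaranteeing a unique equilibrium as $(\matI-\matW)^{-1}\matD\vecs$ implies $\matD$ is invertible. With this pinned down, the argument collapses to a one-line observation about products of invertible matrices and requires no further calculation.
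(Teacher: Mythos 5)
Your proof takes essentially the same route as the paper's: both factor $\matZ = \bigl((\matI-\matW)^{-1}\matD\bigr)\matS$ and conclude that multiplication by the invertible square matrix $(\matI-\matW)^{-1}\matD$ preserves rank. Your explicit caveat that invertibility of $\matD$ requires $\lambda_u < 1$ for every agent is a point the paper's proof silently assumes, so flagging it is a small but worthwhile refinement rather than a deviation.
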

\begin{proof}
    Both $\matS$ and $\matZ$ are $(n\times m)$-matrices, and their rank is at most $\min(n,m)$. Both $(\matI-\matW)^{-1}$ and $\matD$ are invertible square matrices. As such, $(\matI-\matW)^{-1}\matD$ is also an invertible square matrix and, thus, has full rank $n$. Now since $\matZ = ((\matI-\matW)^{-1} \matD) \matS$, the rank of $\matZ$ is equal to the rank of $\matS$. 
\end{proof}

\section{Optimal Simple Schemes}
\label{sec:simple}

In this section, we show that many natural objectives for $\calS$ are optimized using two simple schemes, \emph{no-signaling} or \emph{full revelation}. For no-signaling, we assume that $\calS$ always sends the same signal, i.e., there is $\sigma$ such that $\varphi_{\theta,\sigma} = 1$ for all $\theta \in \Theta$. For full revelation, $\calS$ sends an individual signal for each state, i.e., for each $\theta \in \Theta$ there is a signal $\sigma \in \Sigma$ such that $\varphi_{\theta,\sigma} = 1$, and $\varphi_{\theta',\sigma} = 0$ for all $\theta' \in \Theta$ where $\theta' \neq \theta$.

A key property is that many objective functions $f(\vecz_\sigma)$ are \emph{convex}, i.e., they satisfy $f(\delta \vecz + (1-\delta)\vecz') \le \delta f(\vecz) + (1-\delta) f(\vecz')$, for any $\vecz, \vecz'$ and $\delta \in [0,1]$. For these functions, simple schemes are optimal.

\begin{theorem}
    \label{thm:convex}
    If $\calS$ wants to minimize a convex objective function $f(\vecz_\sigma)$, then no-signaling is an optimal scheme. If $\calS$ wants to maximize such a function, then full revelation is an optimal scheme.
\end{theorem}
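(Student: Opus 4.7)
The plan is to exploit the decomposition $\vecz_\sigma = \matZ\vecx_\sigma/\|\vecx_\sigma\|_1$ from \eqref{eq:opinionDecomp} together with the basic identity $\sum_\sigma \vecx_\sigma = \vecq$, which follows because $x_{\theta,\sigma} = \varphi_{\theta,\sigma} q_\theta$ and $\vecPhi$ is row-stochastic. Combined with convexity of $f$, both claims reduce to one-line applications of Jensen's inequality, so the only real work is bookkeeping and identifying the correct convex combinations.

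For the minimization part, I would first observe that the no-signaling scheme sends a single signal $\sigma_0$ with $\vecx_{\sigma_0} = \vecq$ and $\|\vecx_{\sigma_0}\|_1 = 1$, yielding equilibrium $\matZ\vecq$. For any other scheme $\vecPhi$, using $\sum_\sigma \vecx_\sigma = \vecq$, I write
\[
\matZ\vecq \;=\; \sum_\sigma \matZ\vecx_\sigma \;=\; \sum_\sigma \|\vecx_\sigma\|_1 \,\vecz_\sigma,
\]
and since $\sum_\sigma \|\vecx_\sigma\|_1 = 1$, the no-signaling equilibrium is a convex combination of the $\vecz_\sigma$. Jensen then gives $f(\matZ\vecq) \le \sum_\sigma \|\vecx_\sigma\|_1 f(\vecz_\sigma) = \Ex[f(\vecz_\sigma)]$, which proves optimality of no-signaling for minimization.

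For the maximization part, I would flip the direction of the convex combination. Under full revelation with signals $\sigma_1,\ldots,\sigma_m$ matched to states $\theta_1,\ldots,\theta_m$, the scheme realizes signal $\sigma_i$ with probability $q_{\theta_i}$ and induces equilibrium $\vecz_{\theta_i}$, i.e., the $i$-th column of $\matZ$. Expected value is thus $\sum_\theta q_\theta f(\vecz_\theta)$. For an arbitrary scheme $\vecPhi$, each induced equilibrium satisfies
\[
\vecz_\sigma \;=\; \frac{\matZ\vecx_\sigma}{\|\vecx_\sigma\|_1} \;=\; \sum_{\theta \in \Theta} \frac{x_{\theta,\sigma}}{\|\vecx_\sigma\|_1}\, \vecz_\theta,
\]
a convex combination of the full-revelation equilibria $\vecz_\theta$. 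Jensen gives $f(\vecz_\sigma) \le \sum_\theta (x_{\theta,\sigma}/\|\vecx_\sigma\|_1)\, f(\vecz_\theta)$, and summing weighted by $\|\vecx_\sigma\|_1$ yields
\[
\Ex[f(\vecz_\sigma)] \;\le\; \sum_\sigma \sum_\theta x_{\theta,\sigma} f(\vecz_\theta) \;=\; \sum_\theta q_\theta f(\vecz_\theta),
\]
again using $\sum_\sigma \vecx_\sigma = \vecq$. This matches the full revelation value, proving optimality.

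There is no serious obstacle; the only thing to be careful about is to invoke \eqref{eq:opinionDecomp} from Lemma~\ref{lem:equilib} so that $\vecz_\sigma$ genuinely equals $\matZ\vecx_\sigma/\|\vecx_\sigma\|_1$, rather than working with the raw update $(\matI-\matW)^{-1}\matD\,\vecs_\sigma$, and to note that convexity is being applied to the posterior weights, which are nonnegative and sum to one by construction of $\vecx_\sigma$.
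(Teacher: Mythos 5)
Your proof is correct and follows essentially the same route as the paper: both arguments rest on the decomposition $\vecz_\sigma = \matZ\vecx_\sigma/\|\vecx_\sigma\|_1$ together with $\sum_\sigma \vecx_\sigma = \vecq$, and apply Jensen's inequality in the two directions to compare an arbitrary scheme against the no-signaling value $f(\matZ\vecq)$ and the full-revelation value $\sum_\theta q_\theta f(\vecz_\theta)$. Your bookkeeping of the convex-combination weights is in fact slightly cleaner than the paper's (which reuses $\theta$ as both an inner and outer summation index), but there is no substantive difference.
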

The theorem follows directly from classic results on Bayesian persuasion by Kamenica and Gentzkow~\cite{kamenica2011bayesian}. We present a proof for completeness.

\begin{proof}[Proof of Theorem~\ref{thm:convex}]
     Consider matrix $\matZ$, and suppose $\vecz_{\theta}$ is the column vector of equilibrium opinions for state $\theta$. Consider an arbitrary signaling scheme $\vecPhi$. For every signal $\sigma$, we observe that $\vecz_\sigma = \matZ\vecx_\sigma/\|\vecx_\sigma\|_1 = \sum_{\theta \in \Theta} \frac{x_{\theta,\sigma}}{\|\vecx_\sigma\|} \vecz_{\theta}$. Using this, we bound the expected value of $\vecPhi$ by
    \begin{align*}
        &\Ex[f(\vecz_\sigma)] = \sum_{\theta,\sigma} x_{\theta,\sigma} \cdot f(\vecz_\sigma)
        = \sum_{\theta,\sigma} x_{\theta,\sigma} \cdot f\left(\sum_{\theta} \frac{x_{\theta,\sigma}}{\|\vecx_\sigma\|_1} \vecz_{\theta}\right) \\
        &\le \sum_{\theta,\sigma} x_{\theta,\sigma} \cdot \left(\sum_{\theta} \frac{x_{\theta,\sigma}}{\|\vecx_\sigma\|_1} f(\vecz_{\theta})\right) = \sum_{\theta,\sigma} x_{\theta,\sigma} f(\vecz_{\theta})\\
        &=  \sum_{\theta \in \Theta} q_{\theta} \sum_{\sigma} \varphi_{\theta,\sigma} f(\vecz_{\theta \in \Theta}) = \sum_{\theta} q_\theta f(\vecz_{\theta})\enspace,
    \end{align*}
    where the inequality follows from convexity of $f$. The final expression is the expected cost of full revelation: With probability $q_\theta$ state $\theta$ arises, $\calS$ fully reveals $\theta$, and $\vecz_\theta$ evolves.

    Similarly,
    \begin{align*}
        &\Ex[f(\vecz_\sigma)] = \sum_{\theta,\sigma} x_{\theta,\sigma} \cdot f\left(\sum_{\theta} \frac{x_{\theta,\sigma}}{\|\vecx_\sigma\|_1} \vecz_{\theta}\right) \\
        &\ge f\left(\sum_{\theta,\sigma} x_{\theta,\sigma} \cdot \left(\sum_{\theta} \frac{x_{\theta,\sigma}}{\|\vecx_\sigma\|_1} \vecz_{\theta} \right)\right) = f\left(\sum_{\theta \in \Theta} q_\theta \vecz_{\theta}\right)\\
        &= f(\matZ \vecq) = f((\matI-\matW)^{-1} \matD \matS \vecq) \enspace.
    \end{align*}
    The last expression is the cost of no-signal: $\matS \vecq$ is the a-priori expectation of preconceptions. This is used for the FJ formation process, and the equilibrium $\matZ \vecq$ emerges. 
\end{proof}

We observe that a variety of distance-based objectives for $\calS$ fulfill convexity. Thus, when $\calP$ strives to optimize any (convex combination) of these objectives, Theorem~\ref{thm:convex} applies and proves optimality of simple signaling schemes. More precisely, no-signaling is optimal when minimizing these functions, and full revelation is optimal when maximizing them.
\begin{lemma}
    $f_{nd}$, $f_p$, $f_d$, $f_{mp}$, and $f_{md}$ are all convex.
\end{lemma}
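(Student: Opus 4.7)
The plan is to verify convexity of each function by expressing it in a form that lets us apply standard preservation rules: norms are convex, absolute value is convex, squares of affine functions are convex, composition of a convex function with an affine map preserves convexity, and pointwise sums and maxima of convex functions are convex.

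First I would handle $f_{nd}$: the map $\vecz \mapsto \vecz - \vecz^*$ is affine and $\|\cdot\|_p$ is convex for every $p \in \{1,2,\ldots,\infty\}$ (standard), so $f_{nd}$ is a composition of a convex function with an affine map.

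For $f_p$ and $f_d$, I would argue that both are quadratic forms in $\vecz_\sigma$ with positive semi-definite matrices, and hence convex. Concretely, setting $\matC = \matI - \tfrac{1}{n}\mathbf{1}\mathbf{1}^\top$ (the centering matrix, which is PSD), one has
\[
f_p(\vecz_\sigma) = \|\matC\vecz_\sigma\|_2^2,
\]
which is a squared 2-norm of a linear function of $\vecz_\sigma$ and thus convex. For $f_d$, each summand $w_{uv}(z_{u,\sigma}-z_{v,\sigma})^2$ is a non-negative weight times the square of a linear functional in $\vecz_\sigma$, which is convex; $f_d$ is a sum of such terms, hence convex. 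Equivalently, $f_d(\vecz_\sigma) = \vecz_\sigma^\top \matL \vecz_\sigma$ where $\matL$ is the weighted graph Laplacian, which is PSD.

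Finally, for $f_{mp}$ and $f_{md}$, each term $|z_{u,\sigma} - z_{v,\sigma}|$ is the absolute value of a linear functional and hence convex, and a pointwise maximum of finitely many convex functions is convex. I do not expect a real obstacle here: the only thing to be slightly careful about is to invoke the convexity-preservation rules in the right order (affine precomposition, nonnegative combinations, pointwise maxima), but there is no hidden subtlety.
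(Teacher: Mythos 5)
Your proposal is correct and follows essentially the same route as the paper: affine precomposition with norms for $f_{nd}$, squares of affine functionals summed with non-negative weights for $f_p$ and $f_d$, and pointwise maxima of absolute values of linear functionals for $f_{mp}$ and $f_{md}$. Your explicit use of the centering matrix and the Laplacian quadratic form is a slightly crisper packaging of the paper's summand-by-summand argument, but it is the same underlying idea.
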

\begin{proof}
    By definition, every norm is convex, and every $p$-norm is indeed a norm. $\vecz_{\sigma} - \vecz^*$ is linear in $\vecz_\sigma$, so $f_{nd}(\vecz_{\sigma}) = \|\vecz_{\sigma} - \vecz^* \|_p$ is convex.
    
    The average opinion $\overline{\vecz_{\sigma}}$ in any given signal $\sigma$ is a linear combination of average opinions from different states. Hence, $z_{u,\sigma} - \overline{\vecz_{\sigma}}$ is linear in $z_{u,\sigma}$. Now $f(x) = x^2$ is a convex function, so $(z_{u,\sigma} - \overline{\vecz_{\sigma}})^2$ is convex. The sum of convex functions is also a convex function, hence, $f_{p}(\vecz_\sigma) = \sum_{u \in V} (z_{u,\sigma} - \overline{\vecz_{\sigma}})^2$ is convex. 
    
    For $f_d(\vecz_\sigma) = \sum_{(u,v) \in E} w_{uv} (z_{u,\sigma} - z_{v,\sigma})^2$ the argument is similar. Each $(z_{u,\sigma} - z_{v,\sigma})^2$ is convex in each variable. The sum of convex functions is a convex function.
    
    Both, $f_{mp}(\vecz_\sigma) = \max_{u,v \in V} | z_{u,\sigma} - z_{v,\sigma} |$ and $f_{md}(\vecz_\sigma) = \max_{(u,v) \in E} | z_{u,\sigma} - z_{v,\sigma} |$ return the maximum value of convex functions $f(x,y) = |x-y|$. They are convex.
\end{proof}

On an intuitive level, this result can be interpreted as follows. The peer pressure effect in the FJ model introduces a tendency towards consensus in equilibrium. Consensus opinions minimize many convex objectives, such as polarization or disagreement indices. In contrast, preconceptions introduce heterogeneity and discrepancy in equilibrium. As a signaling scheme gets more informative, agents can Bayes update their intrinsic differences in a more pronounced fashion. Heterogeneity in preconceptions emerges more drastically in the resulting equilibrium. Consequently, no-signaling and full revelation are the schemes using which the impact of preconceptions (and hence, e.g., polarization and disagreement indices) in the emerging equilibrium are minimized and maximized, respectively.

\section{Range-Based Objectives}
\label{sec:range}

\subsection{Rank of \textbf{S}}
\label{sec:rank}

In this section, we provide a polynomial-time algorithm for optimal signaling when the $(n \times m)$-matrix $\matS$ of preconceptions has constant rank $d=O(1)$. This scenario generalizes many natural special cases, such as, e.g.:
\begin{itemize}
    \item Constant number of agents and/or states.
    \item Constant number of agents with state-de\-pen\-dent preconceptions (any other agent has the same preconception in all states).
    \item Constant number of agents with preconceptions (any other agent has no preconception at all). 
    \item There is a basic preconception $s^b_u$ for each agent $u \in V$. In every state $\theta$, this preconception gets scaled by a factor $\alpha_\theta > 0$, i.e., $s_{u,\theta} = s^b_u \cdot \alpha_\theta$.
\end{itemize}

\begin{theorem}
    \label{thm:constantRank}
    If $\matS$ has constant rank, then there is a polynomial-time algorithm to compute an optimal signaling scheme for every range-based objective.
\end{theorem}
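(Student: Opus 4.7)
The plan is to reduce the problem to a polynomial-size linear program driven by a cell decomposition of the achievable opinion space. By Lemma~\ref{lem:equilib} I may assume $\matS = \matZ$ with $\matW = \bm{0}$ and $\matD = \matI$, so that $\vecz_\sigma = \matS\vecx_\sigma/\|\vecx_\sigma\|_1$ for every signal $\sigma$. By Lemma~\ref{lem:rank}, $\matS$ still has rank $d = O(1)$, hence every achievable $\vecz_\sigma$ lies in the $d$-dimensional column space $U$ of $\matS$.

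\textbf{Cell decomposition.} Each range $i_{u,[a,b]} \in D$ is delimited by the two axis-aligned hyperplanes $\{\vecz \in \RR^n : z_u = a\}$ and $\{\vecz \in \RR^n : z_u = b\}$. Intersecting these $2k$ hyperplanes with $U$ yields an arrangement of at most $2k$ hyperplanes inside a space of dimension $d$, which has $O(k^d) = \mathrm{poly}(k)$ full-dimensional cells and can be constructed explicitly in polynomial time by standard computational-geometry techniques since $d$ is constant. Within each cell $C$ the indicator set $T_C \subseteq D$ of ranges containing the corresponding coordinate is constant, so $g(T_C)$ is a well-defined value attached to $C$.

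\textbf{Linear program.} For every cell $C$ I introduce a vector of non-negative variables $\mathbf{y}_C \in \RR_{\ge 0}^m$, interpreted as $\lambda_C\,\mathbf{p}_C$, where $\lambda_C \ge 0$ is the combined probability of signals whose posterior lies in $C$ and $\mathbf{p}_C \in \Delta_m$ is the corresponding average posterior over states. Each axis-parallel inequality $z_u \le b$ (resp.\ $z_u \ge a$) defining $C$ becomes the homogeneous linear constraint $(\matS\mathbf{y}_C)_u \le b\cdot\|\mathbf{y}_C\|_1$ (resp.\ $\ge a\cdot\|\mathbf{y}_C\|_1$). Adding Bayes plausibility $\sum_C \mathbf{y}_C = \vecq$ and maximizing $\sum_C g(T_C)\|\mathbf{y}_C\|_1$ gives an LP of polynomial size, solvable in polynomial time.

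\textbf{Correctness and main obstacle.} Any feasible LP solution converts to a signaling scheme with one signal $\sigma_C$ per cell with $\|\mathbf{y}_C\|_1 > 0$, sent with probability $\|\mathbf{y}_C\|_1$ and inducing posterior $\mathbf{y}_C/\|\mathbf{y}_C\|_1$; the cell inequalities force $\vecz_{\sigma_C} \in C$, plausibility reproduces the prior $\vecq$, and the objective matches. Conversely, any signaling scheme can be coarsened by merging all signals whose posteriors lie in a common cell: convexity of $C$ keeps the merged posterior inside $C$, and $g(T_C)$ depends only on $C$, so the expected objective is preserved. Thus the LP optimum equals the optimal signaling value. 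The hard part is the cell-count bound $O(k^d)$: this crucially uses constant rank to restrict the $2k$ opinion-space hyperplanes to the $d$-dimensional subspace $U$, without which the arrangement could balloon to $O(k^n)$ cells and the LP would no longer be polynomial.
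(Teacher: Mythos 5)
Your overall architecture is the same as the paper's: use the constant rank to work in a $d$-dimensional posterior space, build an arrangement of $O(k)$ hyperplanes with $O(k^d)$ cells, argue via convexity of cells that one signal per cell suffices, and linearize the resulting program by the substitution $\mathbf{y}_C = \lambda_C\,\mathbf{p}_C$ (the paper's $y_{j,\sigma_i} = x^{(b)}_{j,\sigma_i}\, x_{\sigma_i}$ together with the scaled polytopes $\calP_i(x_{\sigma_i})$ is exactly this homogenization). That part is fine.

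There is, however, a genuine gap in your treatment of the cells: you restrict attention to the \emph{full-dimensional} cells of the arrangement and claim that coarsening an arbitrary scheme onto these cells preserves the objective. That converse direction fails. A posterior lying exactly on one of the hyperplanes $z_u = a$ or $z_u = b$ belongs to no open full-dimensional cell, and the indicator set $T_C$ of every adjacent (closed) full-dimensional cell can be a \emph{proper subset} of the true indicator set of that posterior, so folding the signal into such a cell strictly undervalues it. Boundary posteriors can be genuinely necessary for optimality: a degenerate range $[a,a]$ is only ever hit on a hyperplane, and even with non-degenerate ranges a high-value combination $T$ may be realizable only on a lower-dimensional face (e.g., two agents whose opinion coordinates coincide, with abutting ranges $[0,0.5]$ and $[0.5,1]$). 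The paper's own running example exhibits this: the optimal scheme places a posterior at $x_{\sigma_1}=0.3$, where agent 3's point range $[0.3,0.3]$ is hit and the value is $3$ only at that exact point, while both adjacent full-dimensional cells have value $2$; your LP would therefore return a strictly suboptimal scheme there. (Your forward direction survives by monotonicity of $g$ --- the realized value is at least the LP value --- but the LP optimum can sit strictly below the true optimum.) The paper handles this by enumerating the lower-dimensional cells explicitly (the five regions $z_{u,\sigma}>b$, $=b$, $\in(a,b)$, $=a$, $<a$ per range), then closing all cells and using monotonicity of $g$ to assign each shared boundary point to the adjacent cell of maximum value, which is the one recording all inclusions correctly. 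The fix for your argument is to include all faces of the arrangement --- still $O(k^d)$ of them for constant $d$ --- each with its correct indicator set, or to adopt the paper's bookkeeping; as written, the claim that the LP optimum equals the optimal signaling value is false.
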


\begin{proof}
    Suppose $\matS$ has constant rank $d$, then $\matZ$ also has rank $d$ by Lemma~\ref{lem:rank}. We decompose $\matZ = \matB \matC$ using an $(n \times d)$-matrix $\matB = (b_{u,j})_{u\in V, j \in [d]}$ whose column vectors are a basis of the space spanned by $\matZ$, where we use the notation $[d] = \{1,\ldots,d\}$. The $(d \times m)$-matrix $\matC = (c_{j,\theta})_{j \in [d], \theta \in \Theta}$ contains the linear combination of the basis vectors required to generate $\matZ$. For any signal $\sigma$, we define $\vecx^{(b)}_\sigma = \matC \vecx_\sigma/\|\vecx_\sigma\|_1$ (which might be seen as a ``low-dimensional posterior'') and see that $\vecz_\sigma = \matB \vecx^{(b)}_\sigma$.

    For each agent $u \in V$ and each interval $[a,b] \in D_u$, we can partition the space of possible signaling schemes according to the following (in-)equalities: $z_{u,\sigma} \; \Box \; \alpha$ with $\Box \in \{>, =, < \}$ and $\alpha \in \{a,b\}$. Feasible combinations of these six inequalities divide the space into five regions: (1) $z_{u,\sigma} > b$, (2) $z_{u,\sigma} = b$, (3) $z_{u,\sigma} \in (a,b)$, (4) $z_{u,\sigma} = a$ and (5) $z_{u,\sigma} < a$.

    Recall that we have $k$ intervals in total for all agents. These intervals introduce at most $6k$ inequalities. By introducing $r$ hyperplanes into $\RR^d$, the space gets divided into $O(r^d)$ different cells~\cite{buck1943partition}. A cell here refers to a set of points in $\RR^d$ that satisfy exactly the same (in-)equalities (and violate all others). Each cell is convex. 
    
    Indeed, we can interpret $\vecz_\sigma$ as a vector in $d$-dimensional space, since $\vecz_\sigma = \matB\vecx^{(b)}_\sigma$ emerges from the low-dimensional posterior. We introduce $6k$ inequalities into $\RR^d$, which leads to $c = O(k^d)$ many cells. They can be enumerated in polynomial time using classic reverse search algorithms~\cite{avis1996reverse}. For each cell, we detect which of the $6k$ (in-)equalities are satisfied by this cell. This yields a description of the linear polytope $\calP_i$ of each cell $i \in [c]$.

    The polytope $\calP_i$ determines for each agent the set of ranges its' equilibrium opinion is located in. As such, $f(\vecz_\sigma)$ is the same for each point $\vecz_\sigma \in \calP_i$. We denote this value by $f(\calP_i)$. Suppose there are two signals $\sigma_1, \sigma_2$ with $\vecz_{\sigma_1},\vecz_{\sigma_2} \in \calP_i$. We can adjust $\vecPhi$ such that whenever $\calS$ decides to send $\sigma_1$ and $\sigma_2$, it sends a new signal $\sigma$ instead. This new signal has probability $\| \vecx_{\sigma_1}\|_1 + \|\vecx_{\sigma_2}\|_1$. Since the cell is convex, $\vecz_\sigma \in \calP_i$ and, hence, the value $f(\vecz_\sigma) = f(\calP_i)$. The adjusted signaling scheme using $\sigma$ instead of $\sigma_1,\sigma_2$ yields the same expected value. Consequently, w.l.o.g.\ we assume for each cell $i\in [c]$, $\vecPhi$ has at most one signal $\sigma_i$ with $\vecz_{\sigma_i}$ in that cell.

    Consider any cell $i$ with an equality constraint w.r.t.\ range $[a,b]$, i.e., $\vecz_{u,\sigma} = a$ or $\vecz_{u,\sigma} = b$. We can replace each of these equality constraints by the pair of constraints $\vecz_{u,\sigma} \ge a$ and $\vecz_{u,\sigma} \le b$. This does not change the inclusion status w.r.t.\ range $[a,b]$ and, consequently, the value of $f(\calP_i)$. Consequently, we only consider cells based on feasible combinations of $z_{u,\sigma} \; \Box \; \alpha$ with $(\Box,\alpha) \in \{(<,a), (\ge,a), (\le,b), (>,b)\}$. Further, observe that the polytope $\calP_i$ for a cell $i$ is open whenever the cell is characterized by a strict inequality. We enlarge each cell to include its boundary, i.e., we replace each strict inequality with $>$ or $<$ by a $\le$ or $\ge$, respectively. A cell may then include solutions that do not yield a value\footnote{For example, the region described by the inequality $z_{u,\sigma} \ge b$ can have a different value for $f(\vecz_\sigma)$ when the original $z_{u,\sigma} > b$ holds or when the solution is on the boundary $z_{u,\sigma} = b$.} $f(\calP_i)$. These solutions on the boundary of the polytope are then shared with other cells. Suppose $\vecz_{\sigma}$ is on the boundary of several cells. Since we strive to maximize $f(\vecz_\sigma)$, we interpret this signal to lie in the cell $\sigma_i$ that yields the maximum value $f(\calP_i)$. As $f$ is \emph{monotone non-decreasing}, this is the cell $\calP_i$ that correctly reflects all interval inclusions. 
    
    Given closed polytopes $\calP_i$ and at most one signal $\sigma_i$ for each cell $i \in [c]$, we obtain an optimal signaling scheme by solving the optimization problem~\eqref{eq:LP1}. The first constraints ensure that $\vecPhi$ yields a distribution over signals for each $\theta$. The next constraints define the entries of $\matX$ and $x_{\sigma_i} = \| \vecx_{\sigma_i} \|_1$, the low-dimensional posteriors $\vecx_{\sigma_i}^{(b)}$, and the equilibrium vectors $\vecz_{u,\sigma_i} = \matB \vecx_{\sigma_i}^{(b)}$. Finally, we add the constraints describing each (closed) polytope $\calP_i$. 
        
    \begin{equation}
    \label{eq:LP1}
    \begin{array}{lrcll}
        \multicolumn{4}{l}{\text{Max.\ } \D \sum_{i=1}^c x_{\sigma_i} \cdot f(\calP_i)}\\[0.1cm]
        \text{s.t.\ } & \D \sum_{i=1}^c \varphi_{\theta,\sigma_i} &=& 1 & \text{ for all } \theta \in \Theta\\
                    & \varphi_{\theta,\sigma_i} & \ge & 0 & \text{ for all } \theta \in \Theta, i \in [c] \\
                    & x_{\theta,\sigma_i} &=& \varphi_{\theta,\sigma_i} \cdot q_\theta & \text{ for all } \theta \in \Theta, i \in [c] \\
                    & x_{\sigma_i} &=& \D \sum_{\theta \in \Theta} x_{\theta,\sigma_i} & \text{ for all } i \in [c] \\
                    & x_{j,\sigma_i}^{(b)} &=& \D \sum_{\theta \in \Theta} c_{j,\theta} \cdot \D \frac{x_{\theta,\sigma_i}}{x_{\sigma_i}} & \text{ for all } j \in [d]\\
                    & z_{u,\sigma_i} &=& \D \sum_{j=1}^d b_{\theta,j} x_{j,\sigma_i}^{(b)}  & \text{ for all } u \in V, i \in [c]\\
                    & z_{u,\sigma_i} &\in& \calP_i & \text{ for all } u \in V, i \in [c]
    \end{array}    
    \end{equation}
    
    The program~\eqref{eq:LP1} is not an LP, since $\vecx_{\sigma_i}^{(b)}$ is given by a non-linear constraint. An equivalent formulation with $y_{j,\sigma_i} = x_{j,\sigma_i}^{(b)} x_{\sigma_i}$ that replaces the last three constraint classes is 
    
    \begin{equation*}
    \begin{array}{lrcll}
                    & y_{j,\sigma_i} &=& \D \sum_{\theta \in \Theta} c_{j,\theta} \cdot x_{\theta,\sigma_i} & \text{ for all } j \in [d]\\
                    & z_{u,\sigma_i} &=& \D \sum_{j=1}^d b_{\theta,j} y_{j,\sigma_i} & \text{ for all } u \in V, i \in [c]\\
                    & z_{u,\sigma_i} &\in& \calP_i(x_{\sigma_i}) & \text{ for all } u \in V, i \in [c]
    \end{array}
    \end{equation*}
    Here, $\calP_i(x_{\sigma_i})$ is a scaled polytope that emerges when we multiply the right-hand side of all satisfied inequalities in $\calP_i$ by $x_{\sigma_i}$. With these adjustments, the program in~\eqref{eq:LP1} becomes an LP. For $d = O(1)$, this LP has polynomial size. 
\end{proof}

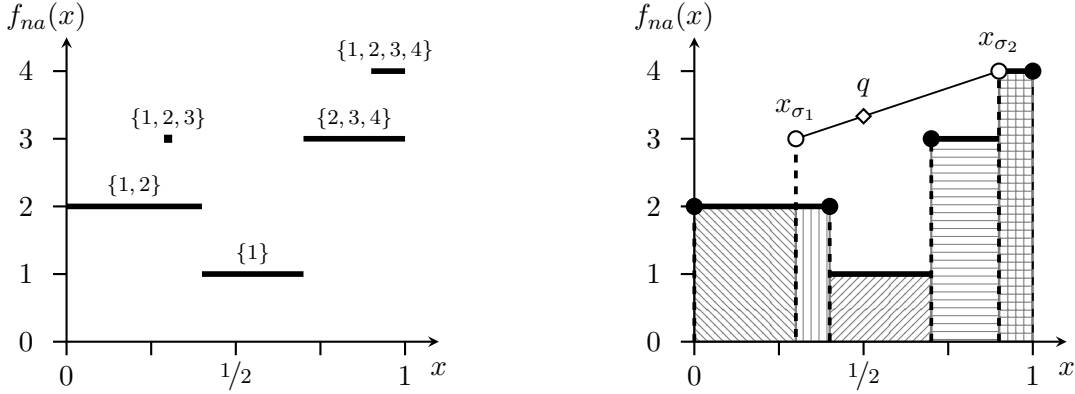
\begin{figure}[!t]
\centering
\begin{subfigure}{0.43\textwidth}
  		\begin{tikzpicture}[
				> = stealth, % arrow head style
				auto,
				node distance = 3cm, % distance between nodes
				thick, % line style
				scale = .9
				]
				
				\tikzstyle{every state}=[
				draw = black,
				thick,
				fill = white,
				minimum size = 4mm
				]
                
                % y axis
				\path[->] (0,0.8) edge (0,5.5); 
				% x axis
                \path[->] (0,1) edge (5.5,1);

                % x-ticks & -labels
                \path (5/4,1) edge (5/4,.8);	
				\path (5/2,1) edge (5/2,.8);	
				\path (3*5/4,1) edge (3*5/4,.8);	
				\path (5,0.8) edge (5,1);
                
				\node at (0,0.5) {$0$};
				\node at (5,0.5) {$1$};
                \node at (5/2,0.5) {$\nicefrac{1}{2}$};
				\node at (5.5,0.6) {$x$};

                % y-ticks & -labels
                 \path (0,1) edge (-0.2,1); \node at (-0.6,1) {$0$};
                \path (0,2) edge (-0.2,2); \node at (-0.6,2) {$1$};
				\path (0,3) edge (-0.2,3); \node at (-0.6,3) {$2$};
                \path (0,4) edge (-0.2,4); \node at (-0.6,4) {$3$};
                \path (0,5) edge (-0.2,5); \node at (-0.6,5) {$4$};
                \node at (-0.3,5.8) {$f_{na}(x)$};

                % rectangle
                \path[line width=.75mm] (0,3) edge (5*0.4,3);
                \path[line width=.75mm] (5*0.4,2) edge (5*0.7,2);
                \path[line width=.75mm] (5*0.7,4) edge (5*1,4);
                \path[line width=.75mm] (5*0.9,5) edge (5*1,5);

                \node[state,draw=black,scale=.4] at (5*0.3,4) [rectangle] { };  

                % interval annotations
                \node at (5*0.2,3.3) {\scriptsize $\lbrace 1,2 \rbrace$};  
                \node at (5*0.3,4.3) {\scriptsize $\lbrace 1,2,3 \rbrace$};   
                \node at (5*0.55,2.3) {\scriptsize $\lbrace 1 \rbrace$};  
                \node at (5*0.85,4.3) {\scriptsize $\lbrace 2,3,4 \rbrace$};  
                 \node at (5*0.94,5.3) {\scriptsize $\lbrace 1,2,3,4 \rbrace$}; 
                
			\end{tikzpicture}
            \caption{Intersection of ranges for all possible posterior beliefs $x \in [0,1]$. The sets above them  specify which agents have an opinion in one of their ranges for a given $x$.}
            \label{fig:example_ranges}
\end{subfigure}
\hspace{.5cm}
\begin{subfigure}{0.43\textwidth}%
    \centering
  		\begin{tikzpicture}[
				> = stealth, % arrow head style
				auto,
				node distance = 3cm, % distance between nodes
				thick, % line style
				scale = .9
				]
				
				\tikzstyle{every state}=[
				draw = black,
				thick,
				fill = white,
				minimum size = 4mm
				]

                % Hatched polytopes
                \draw[pattern=north west lines, pattern color=gray,draw=gray] (0,1) rectangle (5*0.3,3);

                 \draw[pattern= vertical lines, pattern color=gray,draw=gray] (5*0.3,1) rectangle (5*0.4,3);

                 \draw[pattern=north east lines, pattern color=gray,draw=gray] (5*0.4,1) rectangle (5*0.7,2);

                 \draw[pattern=horizontal lines, pattern color=gray,draw=gray] (5*0.7,1) rectangle (5*0.9,4);

                \draw[pattern=grid, pattern color=gray,draw=gray] (5*0.9,1) rectangle (5*1,5);
                
                % y axis
				\path[->] (0,0.8) edge (0,5.5); 
				% x axis
                \path[->] (0,1) edge (5.5,1);

                % x-ticks & -labels
                \path (5/4,1) edge (5/4,.8);	
				\path (5/2,1) edge (5/2,.8);	
				\path (3*5/4,1) edge (3*5/4,.8);	
				\path (5,0.8) edge (5,1);
                
				\node at (0,0.5) {$0$};
				\node at (5,0.5) {$1$};
                \node at (5/2,0.5) {$\nicefrac{1}{2}$};
				\node at (5.5,0.6) {$x$};

                % y-ticks & -labels
                 \path (0,1) edge (-0.2,1); \node at (-0.6,1) {$0$};
                \path (0,2) edge (-0.2,2); \node at (-0.6,2) {$1$};
				\path (0,3) edge (-0.2,3); \node at (-0.6,3) {$2$};
                \path (0,4) edge (-0.2,4); \node at (-0.6,4) {$3$};
                \path (0,5) edge (-0.2,5); \node at (-0.6,5) {$4$};
				\node at (-0.3,5.8) {$f_{na}(x)$};

                % opt point
                \path[line width=.25mm] (5*0.3,4) edge (5*0.9,5);

                % polytope lines
                \path[line width=.5mm,dashed] (5*0,1) edge (5*0,3);
                
                \path[line width=.75mm] (0,3) edge (5*0.3,3);

                 \path[line width=.5mm,dashed] (5*0.3,4) edge (5*0.3,1);
                 
                \path[line width=.75mm] (5*0.3,3) edge (5*0.4,3);

                \path[line width=.5mm,dashed] (5*0.4,3) edge (5*0.4,1);
                
                \path[line width=.75mm] (5*0.4,2) edge (5*0.7,2);

                \path[line width=.5mm,dashed] (5*0.7,1) edge (5*0.7,4);
                
                 \path[line width=.75mm] (5*0.7,4) edge (5*0.9,4);
                
                \path[line width=.5mm,dashed] (5*0.9,1) edge (5*0.9,5);
                
                \path[line width=.75mm] (5*0.9,5) edge (5*1,5);

                \path[line width=.25mm,dashed] (5*1,5) edge (5*1,1);

                % Possible decmposition points
                \node[state,fill=black,draw=black] at (0,3) [circle] { };

            \node[state,fill=white,draw=black,label={above}:{$x_{\sigma_1}$}] at (5*0.3,4) [circle] { };

            \node[state,fill=black,draw=black] at (5*0.4,3) [circle] { };

             \node[state,fill=black,draw=black] at (5*0.7,4) [circle] { };

             \node[state,fill=white,draw=black,label={above}:{$x_{\sigma_2}$}] at (5*0.9,5) [circle] { };

             \node[state,fill=black,draw=black] at (5*1,5) [circle] { };

            \node[state,fill=white,draw=black,label={above}:{$q$}] at (5*0.5,4+1/3) [diamond] { };
               
			\end{tikzpicture}
             \caption{Resulting polytopes (dashed lines, crosshatched space in between). White circles mark the beliefs induced in the optimal scheme, the white diamond marks the prior $q$.}
           \label{fig:example_cells}
\end{subfigure}
		\caption{Illustrations for Example~\ref{ex:multid}.}
		\label{fig:constrank_example}
\end{figure}

Let us illustrate the cell decomposition approach in Theorem~\ref{thm:constantRank} with a small example for $d=2$.

\begin{example}
\label{ex:multid}
\rm Consider a setting with four agents $V=\lbrace 1,2,3,4 \rbrace$ and a (constant) number of $|\Theta| = 2$ states. By Lemma~$\ref{lem:equilib}$, we may assume that network influence is absent and all agents are fully stubborn, i.e., $\matW = \bm{0}$ and $\matD = \matI$. 

 Suppose that the full revelation matrix $\matZ$ has columns $\vecz_{\theta_1} = (0,0,0,1)^\top$ and $\vecz_{\theta_2} = (1,1,1,0)^\top$, thus $d = \text{rank}(\matZ) = 2$. We can use the decomposition $\matZ = \matB \matC$ where $\matB = \matZ$ and $\matC = \matI$. Therefore, $\vecx^{(b)}_\sigma = \vecx_\sigma/\|\vecx\|_1$ in this example. Since there are two states $\Theta = \lbrace \theta_1, \theta_2 \rbrace$, we can express any posterior belief $\vecx/\| \vecx \|_1$ over $\Theta$ in terms of the posterior probability for $\theta_2$, i.e., by a scalar $x \in [0,1]$. Thus, given signal  $\sigma \in \Sigma$, the equilibrium opinions are $z_{u,\sigma} = z_{u,\theta_1} (1-x) + z_{u,\theta_2} x = x$ for all agents $u \in V \setminus \lbrace 4 \rbrace$ and $z_{u,\sigma} = 1-x$ for $u=4$. Likewise, we write $q \in [0,1]$ for the prior distribution. 

We define the set of ranges $D_u$ of the agents to be as follows: $D_1 = \lbrace [0.0,0.7], [0.9,1.0] \rbrace$, $D_2 = \lbrace [0.0,0.4], [0.7,1.0] \rbrace $, $D_3 = \lbrace [0.3,0.3], [0.7,1.0] \rbrace$, and $D_4 = \lbrace [0.0,0.3] \rbrace$. Note that agent 3 holds a range that is equal to a single point in $[0,1]$. For consistency, we may write $z_{u,\sigma} = x$ for agent $u=4$ as well by assuming $z_{4,\theta_1} = 0$, $z_{4,\theta_2} = 1$, and $D_4 = \lbrace [0.7,1.0] \rbrace$. Evidently, this does not change the values of $x$ for which agent $4$ hits her interval. We focus on the objective $f_{na} (\vecz_\sigma) = f_{na}(x)$ of maximizing the expected number of agents with an opinion in one of their ranges. Figure~\ref{fig:example_ranges} shows the intersections of these ranges and, therefore, $f_{na}(x)$ for any posterior belief. 

Due to the choice of ranges, we can divide the $x$-space with respect to each of these intersections $[a,b]$, i.e., we introduce (in-)equalities $z_{u,\sigma} \; \Box \; \alpha$ with $\Box \in \{>, =, < \}$ and $\alpha \in \lbrace a, b \rbrace$ at both ends of each black line in Figure~\ref{fig:example_ranges}. This yields cells as depicted in Figure~\ref{fig:example_cells}: Each interval border indicated by a vertical dashed line (i.e., all points from $\{0,0.3, 0.4, 0.7, 0.9, 1\}$) is a cell on their own. The open subinterval between two consecutive dashed lines is a cell as well. Each cell $i$ is described by a (open) polytope $\calP_i$, i.e., a (open) interval in our case. Obviously, each cell is convex and has a constant value $f_{na}(\calP_i)$ for all $x \in \calP_i$. 

Next we modify the constraints describing each interval as outlined in the proof of Theorem~\ref{thm:constantRank}. We assign each boundary (vertical dashed lines) to both adjoining cells. Our optimization will then ensure that at these points, only the cell with maximum value for $f_{na}$ would be used, as indicated by the circles in Figure~\ref{fig:example_cells}. Finally, one can now compute an optimal signaling scheme $\varphi^*$ by solving our (adjusted version of) LP~\eqref{eq:LP1}. 

Notably, looking more closely at the special case of $|\Theta| = 2$, we can use a more direct, graphical approach to determine $\varphi^*$ as follows. First, we compute the expected value of $f_{na}$ at point $q$ for the no-signaling scheme as a reference. As a direct consequence of Caratheodory’s theorem (c.f. \cite{Dughmi19}), $\varphi^*$ does not have to use more than $|\Theta|$ signals. With one signal we obtain the no-signaling scheme. Hence, let us examine whether $\varphi^*$ should use two distinct signals $\Sigma = \lbrace \sigma_1,\sigma_2 \rbrace$, where $\|\vecx_\sigma \|_1 > 0$ for all $\sigma \in \Sigma$. The best scheme with two signals corresponds to posterior beliefs $x_{\sigma_1} < q < x_{\sigma_2}$ such that the line $\ell_{\sigma_1,\sigma_2}$ defined by the two points $(x_{\sigma_1}, f_{na}(x_{\sigma_1}))$ and $(x_{\sigma_2}, f_{na}( x_{\sigma_2}))$ is an upper bound for $f_{na}$ over $[x_{\sigma_1},x_{\sigma_2}]$ and has maximum value at $q$ among all such possible combinations of beliefs (cf. \cite{kamenica2011bayesian}).

From geometric considerations, we can restrict attention to the posteriors that correspond to interval borders, as part of the adjacent interval that yields larger objective value (as indicated by the circles in Figure~\ref{fig:example_cells}). We prove this property more generally in Proposition~\ref{prop:twoStates} below. In the end, we compare the maximum expected value of $\varphi$ using two signals with the expected value obtained from one signal, i.e., no-signaling. The maximum yields an optimal scheme $\varphi^*$. 

For our example, we assume $q=0.5$. There are nine possible combinations of beliefs we need to check. We find that the beliefs induced by $\varphi^*$ are $x_{\sigma_1} = 0.3$ and  $x_{\sigma_2} = 0.9$, as highlighted by the white circles in Figure~\ref{fig:example_cells}. Consequently, the optimal signaling scheme sends signal $\sigma_1$ with probability $2/3$ and signal $\sigma_2$ with probability $1/3$. This yields $\Ex \left[ f_{na}( x) \right] = 10/3 = 3.\bar{3}$ (white diamond in Figure~\ref{fig:example_cells}). For comparison, no-signaling and full revelation yield an expected value of 1 and 3, respectively.  \hfill $\blacksquare$
\end{example}

Our observations in Example~\ref{ex:multid} can be generalized to all instances with $|\Theta| = 2$ states and any range-based objective $f$. The space of posterior beliefs can be interpreted as the unit interval based on the value of $x_{\sigma,\theta_2} \in [0,1]$ (since $x_{\sigma,\theta_1} = 1-x_{\sigma,\theta_2}$). Unlike in our example, one cannot expect $z_{u,\theta_1}, z_{u,\theta_2} \in \lbrace 0, 1 \rbrace$ in general. However, we can always write $z_{u,\sigma} = x$ for the equilibrium opinion by the following transformation for each $[a,b] \in D_u$ for every agent $u \in V$: First, if $z_{u,\theta_1} > z_{u,\theta_2}$ for $u$, we can just swap their values. Correspondingly, we replace each $[a,b] \in D_u$ by the range $[1-b,1-a]$. Secondly, given that neither $z_{u,\theta_1} > b$ nor $z_{u,\theta_2} < a$, we replace any $[a,b] \in D_u$ by the interval $[a',b']$, where $a' = \max \lbrace 0,(a-z_{u,\theta_1})/(z_{u,\theta_2}-z_{u,\theta_1}) \rbrace$ and $b' = \min \lbrace 1, (b-z_{u,\theta_1})/(z_{u,\theta_2}-z_{u,\theta_1}) \rbrace$. Otherwise, $[a,b]$ can be neglected in the first place since it is never hit.
 
After assigning the boundaries, the polytopes correspond to consecutive subintervals or points in $[0,1]$. Since $f(x)$ is piecewise constant, this generates at most $2k$ \emph{breakpoints} at which the value of $f$ changes. We show in the following proposition that there is an optimal scheme $\varphi^*$ such that the induced beliefs lie exclusively on these breakpoints.

\begin{proposition}
    \label{prop:twoStates}
    For $|\Theta| = 2$ and any range-based objective $f$, there is an optimal signaling scheme that gives out signals $\sigma_1, \sigma_2$ inducing beliefs $x_{\sigma_1} \in \calP_1, x_{\sigma_2} \in \calP_2$ such that $x_{\sigma_1}$ and $x_{\sigma_2}$ lie on a breakpoint of $f$ at $\calP_1$ and $\calP_2$, respectively.
\end{proposition}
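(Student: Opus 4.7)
My plan is to combine a Caratheodory-style reduction with a local move argument on the piecewise-constant landscape of $f$. As already observed in Example~\ref{ex:multid}, since $|\Theta|=2$, Caratheodory's theorem lets me restrict attention to schemes using at most two signals, and each posterior collapses to a scalar $x \in [0,1]$ encoding the posterior probability of $\theta_2$; the prior likewise becomes a scalar $q$. Writing the two posteriors as $x_1 \le x_2$ and signal probabilities as $p_1, p_2$, Bayes plausibility reads $p_1 + p_2 = 1$ and $p_1 x_1 + p_2 x_2 = q$. The expected objective equals $p_1 f(x_1) + p_2 f(x_2)$, which is exactly the height at $x=q$ of the chord joining $(x_1, f(x_1))$ and $(x_2, f(x_2))$. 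Let $\calP_i$ denote the cell containing $x_i$, on which $f$ is constant and equal to some value $f_i$.

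In the main case $x_1 < q < x_2$, I would slide $x_1$ inside $\calP_1$ with $x_2$ fixed (adjusting $p_1, p_2$ to preserve Bayes plausibility). Since $f_1$ is constant on $\calP_1$, the chord's height at $q$ simplifies to
\[
    L(q) \;=\; f_1 + (f_2 - f_1)\,\frac{q - x_1}{x_2 - x_1},
\]
whose derivative with respect to $x_1$ is $-(f_2 - f_1)(x_2 - q)/(x_2 - x_1)^2$. This has a constant sign throughout $\calP_1$, so moving $x_1$ to the endpoint of $\calP_1$ in the favorable direction weakly increases $L(q)$ and places $x_1$ on a breakpoint of $f$. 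By the boundary-assignment rule from the proof of Theorem~\ref{thm:constantRank}, this endpoint is assigned to the adjacent cell whose $f$-value is at least $f_1$, so the objective can only increase further. An identical argument applied to $x_2$ then yields the desired optimal scheme with both posteriors on breakpoints.

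The degenerate case $x_1 = x_2 = q$ (no-signaling) is where the main subtlety lies. If $q$ itself is a breakpoint, we are done by taking $\sigma_1 = \sigma_2$ at that breakpoint. Otherwise $q$ lies strictly inside some cell $\calP = [a,b]$ with value $f_1$. I would compare no-signaling (value $f_1$) against the two-signal scheme with posteriors $a$ and $b$: by the boundary rule $f(a), f(b) \ge f_1$, so the chord through $(a, f(a))$ and $(b, f(b))$ evaluated at $q$ is a convex combination of $f(a), f(b)$ and hence at least $f_1$. A strict inequality would contradict the assumed optimality of no-signaling, so equality must hold, giving an equivalent optimal scheme whose posteriors are at the breakpoints $a, b$.

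The main obstacle is consistency with the boundary-assignment rule from Theorem~\ref{thm:constantRank}: when $x_1$ (or $x_2$) slides to an endpoint, it typically now belongs to a different cell with a potentially different $f$-value, and the proof must check that the reassignment is always monotone in the right direction. The derivative computation itself is elementary; the careful bookkeeping of cell memberships at breakpoints is what carries the argument.
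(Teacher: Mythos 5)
Your proposal is correct and follows essentially the same route as the paper's proof: express the expected value as the height at $q$ of the chord through $(x_{\sigma_1}, f(\calP_1))$ and $(x_{\sigma_2}, f(\calP_2))$, observe that its partial derivative in each posterior has constant sign while that posterior stays inside its cell, and push each posterior to the favorable cell endpoint, i.e., a breakpoint. Your additional handling of the degenerate no-signaling case and the boundary-assignment bookkeeping only makes explicit details the paper treats implicitly (by comparing against no-signaling separately).
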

\begin{proof}
    Let $\varphi$ be a signaling scheme that induces two beliefs $x_{\sigma_1} \in \calP_1$ and $x_{\sigma_2} \in \calP_2$. The value $\Ex[f(x_\sigma)]$ of $\varphi$ is equal to
    \begin{equation}
    \label{eq:example_lineeq}
        \ell_{\sigma_1,\sigma_2}(q) = \frac{x_{\sigma_2}-q}{x_{\sigma_2}-x_{\sigma_1}}  f(\calP_1) + \frac{q-x_{\sigma_1}}{x_{\sigma_2}-x_{\sigma_1}} f(\calP_2) \, ,
    \end{equation}
    where the fractions in the first and the second term tell the probability by which signals $\sigma_1$ and $\sigma_2$ are sent, respectively. The partial derivatives of Eq.~\eqref{eq:example_lineeq} with respect to $x_{\sigma_1}$ and $x_{\sigma_2}$ read
    \begin{align}
        \frac{\partial}{\partial x_{\sigma_1}} \ell_{\sigma_1,\sigma_2} &= - \frac{\left(f(\calP_2) - f(\calP_1)\right)\left(x_{\sigma_2} - q \right)}{(x_{\sigma_2}-x_{\sigma_1})^2} \label{eq:example_deriv1} \, ,\\[1mm]
        \frac{\partial}{\partial x_{\sigma_2}} \ell_{\sigma_1,\sigma_2} &= - \frac{\left(f(\calP_2) - f(\calP_1)\right)\left(q-x_{\sigma_1} \right)}{(x_{\sigma_2}-x_{\sigma_1})^2} \label{eq:example_deriv2} \, .
    \end{align}
    First, we assume that $x_{\sigma_1} < q$ is fixed and $f(\calP_1) < f(\calP_2)$. By this assumption, we see Eq.~\eqref{eq:example_deriv2} is smaller than zero, i.e., any decrease in $x_{\sigma_2}$ along $\calP_2$ strictly increases the expected value of $\varphi$. We define $x_{b} = \min \lbrace x \mid x \in \calP_2 \rbrace$, where we may assume $x_{b} > q$ since otherwise no-signaling would yield value $f(\calP_2) \geq \Ex[f(x_\sigma)]$ in the first place. Therefore, the expected value of $\varphi$ is always improved by choosing $x_{\sigma_2} = x_b$ instead of $x_{\sigma_2} > x_b$ if $f(\calP_1) < f(\calP_2)$.

    On the other hand, if $f(\calP_1) > f(\calP_2)$, Eq.~\eqref{eq:example_deriv2} is larger than zero i.e., any increase  in $x_{\sigma_2}$ along $\calP_2$ strictly increases the expected value of $\varphi$. By a corresponding argument, $x_r < q$. Therefore, the expected value of $\varphi$ is always improved by choosing $x_{\sigma_2} = x_r$ instead of $x_{\sigma_2} < x_r$, where  $x_{r} = \max \lbrace x \mid x \in \calP_2 \rbrace$. For $f(\calP_1) = f(\calP_2)$, we can just choose $x_{\sigma_2} = x_b$ without changing the value of $\Ex[f(x_\sigma)]$.
    
    Analogously, when we assume that $x_{\sigma_1}$ can be shifted and $x_{\sigma_2} > q$ is fixed, the same observations hold through Eq.~\eqref{eq:example_deriv1}. Since $x_b$ and $x_r$ lie at the left and right border of their respective polytopes and the above findings hold for any signaling scheme, the proposition follows.
\end{proof}

Since each polytope has two endpoints, all possible $\ell_{\sigma_1,\sigma_2}$ can be tested in time $\mathcal{O}{(k^2})$. Eventually, the optimal value $\ell_{\sigma_1,\sigma_2}$ is compared to the value of no-signaling to obtain $\varphi^*$.

\subsection{Range Combinations with Non-Zero Value}
\label{sec:nonzero}

In this section, we assume that only a polynomial number of sets $T \subseteq D$ have non-zero value. This scenario contains many cases beyond the previous section, such as, e.g.:
\begin{itemize}
    \item Maximize the probability that \emph{all desired opinion intervals are hit} simultaneously, i.e., $g(D) = 1$ and $0$ else.
    \item The total number of ranges is $k \in O(\log m + \log n)$.
    \item The structure of the instance gives rise to at most polynomially many range combinations for any possible signal.
    \item The objective function is based on range combinations from a constant number of agents.
\end{itemize}

We adjust the approach above to a more direct cell decomposition. Let $\mathcal{C} =  \{ T \subseteq D \mid g(T) > 0\}$ and $\ell = |\mathcal{C}|$.

\begin{theorem}
    \label{thm:polysigscheme}
    There is an algorithm to compute an optimal signaling scheme for every range-based objective that runs in time polynomial in $n$, $m$ and $\ell$.
\end{theorem}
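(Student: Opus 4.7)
The plan is to reduce the problem to a linear program (LP) of size polynomial in $n$, $m$, $k$, and $\ell$. The guiding observation is that since $g$ is monotone and nonnegative, $\mathcal{C}$ is upward-closed, so any signal in the optimal scheme either induces an equilibrium whose hit set lies in $\mathcal{C}$ or contributes $0$ to the objective. Grouping signals by their hit set leaves at most $\ell+1$ effectively distinct signals.

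For each $T\in\mathcal{C}$, I define the polytope
\[
    \calP_T \;=\; \bigl\{\vecx\in\RR^m_{\ge 0}\;:\;\mathbf{1}^\top\vecx=1,\ (\matZ\vecx)_u\in[a,b]\text{ for every }i_{u,[a,b]}\in T\bigr\}
\]
of posteriors over $\Theta$ whose induced equilibrium $\matZ\vecx$ hits every range in $T$. Each $\calP_T$ is cut out by at most $2|T|\le 2k$ range inequalities plus the simplex constraints. I add a dummy bucket $\bot$ with $\calP_\bot$ the full simplex and $g(\bot)=0$. For each $T\in\mathcal{C}\cup\{\bot\}$ introduce LP variables $y_T\ge 0$ and $\mathbf{w}_T\in\RR^m_{\ge 0}$, interpreted as the total probability of sending a signal $\sigma_T$ and the joint probabilities $w_{T,\theta}=\varphi_{\theta,\sigma_T}q_\theta$. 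Enforce $\mathbf{w}_T\in y_T\cdot\calP_T$ by scaling each defining inequality $\alpha^\top\vecx\le\beta$ of $\calP_T$ to $\alpha^\top\mathbf{w}_T\le\beta y_T$, together with $\mathbf{1}^\top\mathbf{w}_T=y_T$; enforce Bayes plausibility $\sum_T\mathbf{w}_T=\vecq$; and maximize $\sum_{T\in\mathcal{C}}g(T)\,y_T$. This LP has $O(\ell m)$ variables and $O(\ell k+m)$ constraints and is solvable in polynomial time.

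Correctness follows from two directions. From an LP solution, I build a scheme that, whenever $y_T>0$, sends $\sigma_T$ with probability $y_T$ and induces posterior $\mathbf{w}_T/y_T\in\calP_T$. The resulting equilibrium hits some set $T^*\supseteq T$, so by monotonicity its value $g(T^*)\ge g(T)$, and the expected value of the scheme is at least the LP objective. Conversely, given any signaling scheme, group its signals by the hit set $T^*$ of their equilibrium: merge those with $T^*\in\mathcal{C}$ into bucket $T^*$ and those with $g(T^*)=0$ into $\bot$. Convexity of $\calP_{T^*}$ ensures that averaging posteriors preserves membership, so this yields a feasible LP solution of value exactly equal to the scheme's expected value.

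The main conceptual obstacle is the \emph{target-versus-actual} mismatch: a posterior $\vecx\in\calP_T$ may land in a strict superset $T^*\supsetneq T$, so the LP credits only $g(T)$ whereas the scheme actually earns $g(T^*)\ge g(T)$. Monotonicity of $g$ together with upward-closure of $\mathcal{C}$ resolves this, since in the backward direction each signal is assigned to the bucket matching its genuine hit set. Combining the two inequalities gives LP value $=$ OPT; extracting the signaling scheme from an optimal LP solution is then immediate.
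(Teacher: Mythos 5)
Your proposal is correct and follows essentially the same route as the paper: the paper's LP~\eqref{eq:LP2} is exactly your homogenized formulation (with $x_{\sigma_T}$ playing the role of $y_T$, the constraints $z_{u,\sigma_T}\ge a\cdot x_{\sigma_T}$ and $z_{u,\sigma_T}\le b\cdot x_{\sigma_T}$ being your scaled polytope inequalities, and the same dummy signal $\sigma_0$), and its correctness argument likewise rests on convexity of each $\calP_T$ to merge signals and on monotonicity of $g$ to resolve the target-versus-actual mismatch when $\calP_{T'}\subseteq\calP_T$ for $T\subseteq T'$. Your explicit observation that $\mathcal{C}$ is upward-closed is a slightly more detailed packaging of the same point, but the decomposition and the LP are the same.
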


    \begin{equation}
    \label{eq:LP2}
    \begin{array}{rcll}
        \multicolumn{3}{l}{\text{Max}. \D \sum_{T \in \mathcal{C}} x_{\sigma_T} \cdot g(T)}\\[0.1cm]
        \text{s.t.} \D \sum_{T \in \mathcal{C} \cup \{0\}} \varphi_{\theta,\sigma_T} &=& 1 & \text{ for all }\theta \in \Theta\\
                     \varphi_{\theta,\sigma_T} & \ge & 0 & \text{ for all }\theta \in \Theta, T \in \mathcal{C} \cup \{0\} \\
                     x_{\theta,\sigma_T} &=& \varphi_{\theta,\sigma_T} \cdot q_\theta & \text{ for all } \theta \in \Theta, T \in \mathcal{C} \\
                     x_{\sigma_T} &=& \D \sum_{\theta \in \Theta} x_{\theta,\sigma_T} & \text{ for all } T \in \mathcal{C}\\
                     z_{u,\sigma_T} &=& \D \sum_{\theta \in \Theta} z_{u,\theta} x_{\theta,\sigma_T} & \text{ for all } u \in V, T \in \mathcal{C}\\
                     z_{u,\sigma_T} &\ge& a \cdot x_{\sigma_T} & \text{ for all } T \in \mathcal{C}, i_{u,[a,b]}\in T\\
                     z_{u,\sigma_T} &\le& b \cdot x_{\sigma_T} & \text{ for all } T \in \mathcal{C}, i_{u,[a,b]}\in T 
    \end{array}
    \end{equation}
    
\begin{proof}
    Reconsider the approach of Theorem~\ref{thm:constantRank} above. We now apply a cell decomposition of the space spanned by $\matZ$. Observe that for a range-based objective it is only relevant which agent is located in which interval (but not, e.g., whether an agent is above or below a certain interval).
    
    For each $T \in \mathcal{C}$ we consider necessary conditions for the equilibrium opinions, i.e., $z_{u,\sigma_T} \ge a$ and $z_{u,\sigma_T} \le b$ for each $i_{u,[a,b]} \in T$. These constraints define a convex polytope $\calP_T$. Let us first assume that each solution in $\calP_T$ has the same value $f(\calP_T) = g(T)$. Then an optimal scheme w.l.o.g.\ has at most one signal $\sigma_T$ with an opinion vector in $\calP_T$ (by convexity).

    However, the polytopes $\calP_T$ can overlap significantly. In particular, if $T \subseteq T'$ then $\calP_{T'} \subseteq \calP_T$. As such, it might be that some $\vecz \in \calP_T$ actually gives rise to a larger value than $g(T)$. Here we can again assume w.l.o.g.\ that each signal $\sigma$ with a vector $\vecz_\sigma \in \calP_{T_1} \cap \ldots \cap \calP_{T_r}$ is assigned to the one with maximal $g(T_j)$. By monotonicity of $g$, this is the one that correctly expresses the interval inclusions for all agents.

    In conclusion, LP~\eqref{eq:LP2} is similar to LP~\eqref{eq:LP1} with more explicit cells arising from the sets in $\mathcal{C}$. Note that there might be no decompositions of the prior $\vecq$ into signals $\sigma$ that \emph{all} satisfy $T_\sigma \in \mathcal{C}$. Since $g(T) = 0$ for all $T \in 2^D \setminus \mathcal{C}$, we use a single \emph{dummy signal} $\sigma_0$ for the remaining cases. The dummy signal only arises in the first two constraint sets for $\vecPhi$ to ensure that it is a feasible decomposition of $\vecq$. The size of the LP is polynomial in $n$, $m$ and $\ell$. The optimal solution represents an optimal scheme $\vecPhi$. 
\end{proof}

We elaborate on the last two example scenarios given in the beginning of this section.

\paragraph{Polynomial Signal Structures.} 
The approach can also be applied when the structure of the instance allows only a polynomial number of different range combinations to be fulfilled in any signal (even though $g$ might have arbitrarily many non-zero values). We explain the reasoning in an example scenario. Consider the case of persuading voters, i.e., each agent has a single interval $D_u = \{[0.5,1]\}$. Let agents be numbered, i.e., $V = \{1,\ldots,n\}$. 
We assume $\matZ$ yields \emph{monotone agents}: 
\begin{equation}
    \label{eq:orderedAgents}
    z_{u-1,\theta} \quad \ge \quad z_{u,\theta} \qquad \text{ for all } u \in V \setminus \{1\}, \theta \in \Theta.
\end{equation}
This captures a scenario in which voters can be ordered w.r.t.\ (pointwise non-increasing) willingness to support a yes-decision in the full revelation equilibrium of any state. Note that this scenario includes cases, in which the matrix $\matZ$ has full rank, and the number of range combinations with non-zero value are exponential in $n$ (e.g., for $g = f_{na}$).

However, observe that there are at most $n+1$ range combinations that can emerge in any signaling scheme. For any signal $\sigma$, if $z_{u,\sigma} \ge 0.5$, then~\eqref{eq:opinionDecomp} and \eqref{eq:orderedAgents} imply $z_{u-1,\sigma} \ge 0.5$ as well. Consequently, we either have no agent with an opinion in $[0.5,1]$ at all, or there is a unique largest agent $u_\sigma$ such that exactly the agents $\{1,\ldots,u_\sigma\}$ have an equilibrium opinion in $[0.5,1]$. We can limit attention to a set $\mathcal{C}$ containing these $n$ range combinations and apply Theorem~\ref{thm:polysigscheme}. 

The restriction on signal structure is present in many more general scenarios. For example, for monotone agents with monotone ranges $D_u = \{[a_u,1]\}$ with $a_{u-1} \le a_u$ for all $u \in V \setminus \{1\}$, precisely the same argument applies. In a different direction, assume that $\matZ$ yields \emph{bitonic agents}, where there is $k \in \{2,\ldots,n\}$ such that
\begin{align*}
    z_{u-1,\theta} \quad &\ge \quad z_{u,\theta} \qquad \text{ for all } u \in \{2,\ldots,k\}, \theta \in \Theta, \text{ and }\\
    z_{u-1,\theta} \quad &\le \quad z_{u,\theta} \qquad \text{ for all } u \in \{k+1,\ldots,n\}, \theta \in \Theta.
\end{align*}
Consider any signal $\sigma$. By the same arguments as above, if $z_{k,\sigma} \ge 0.5$, then $z_{v,\sigma} \ge 0.5$ for all $v \in V$. Otherwise, there exists a unique largest agent among $\{1,\ldots,k-1\}$ such that $z_{u,\sigma} \ge 0.5$ (or none), as well as a unique smallest agent among $\{k+1,\ldots,n\}$ (or none). Hence, in total there are at most $1+k(n-k+1) \in O(n^2)$ range combinations that can emerge in any signaling scheme. 

We point out that the result of Theorem~\ref{thm:constantRank} also falls into this regime. Due to constant rank of $\calS$, we can restrict attention to the $O(k^d)$ many range combinations that result from all the cells $\calP_i$. 

\paragraph{Ranges from a Constant Number of Agents.}
Suppose the objective function $g$ is defined only based on the combinations of ranges for $d$ agents. Clearly, we can disregard the ranges for all remaining agents. For optimizing $g$ we are only interested in equilibrium opinions of the $d$ agents. As such, we can limit attention to computing the set of posteriors, i.e., linear combinations of the rows of $\matZ$ corresponding to the $d$ agents. Note that $\matS$ might even have full rank. However, by restricting $\matZ$ to $d$ rows, we obtain an instance with rank $d$ of $\matZ$. The proof of Theorem~\ref{thm:constantRank} can be applied to show that we only need to consider $O(k^d)$ many range combinations (and, by our observation above, this allows to apply Theorem~\ref{thm:polysigscheme}).

\subsection{General Case}
\label{sec:general}

In this section, we first observe that we can efficiently compute a scheme that represents a (multiplicative) $n$-approximation. This applies for all range-based objectives that are subadditive, i.e., for which $g(T \cup T') \le g(T) + g(T')$. We show that this is tight -- finding an optimal signaling scheme can be \classNP-hard, even for the additive objective $f_{na}$. It is \classNP-hard to approximate the optimal signaling scheme within $n^{1-\varepsilon}$, for any constant $\varepsilon > 0$. 

\begin{theorem}
    For every subadditive range-based objective, a signaling scheme that represents an $n$-approximation can be computed in polynomial time.
\end{theorem}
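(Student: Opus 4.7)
The plan is to reduce the problem to $n$ single-agent subproblems. For each agent $u \in V$, let $D_u' = \{i_{u,[a,b]} : [a,b] \in D_u\}$ and define the single-agent objective $g_u : 2^{D_u'} \to \RR_{\ge 0}$ by $g_u(S) = g(S)$. I would first compute an optimal signaling scheme $\vecPhi_u$ for $g_u$ in polynomial time, then evaluate all $n$ schemes under the original objective $g$ and return the best.

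For the polynomial-time computation of each $\vecPhi_u$, I would invoke the last paragraph of Section~\ref{sec:nonzero} with $d = 1$. Restricting $\matZ$ to the single row of agent $u$ gives a rank-$1$ instance, so the cell decomposition argument from the proof of Theorem~\ref{thm:constantRank} yields only $O(|D_u|) \le O(k)$ relevant range combinations for the equilibrium opinion $z_{u,\sigma}$. Theorem~\ref{thm:polysigscheme} then produces an optimal scheme $\vecPhi_u$ for $g_u$ in polynomial time.

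For the approximation guarantee, subadditivity and monotonicity of $g$ give, for every signal $\sigma$,
\begin{equation*}
g(T_\sigma) \;\le\; \sum_{u \in V} g(T_\sigma \cap D_u') \;=\; \sum_{u \in V} g_u(T_\sigma \cap D_u').
\end{equation*}
Taking expectations under an optimal scheme $\vecPhi^*$ for $g$ with value $\mathrm{OPT}$, and letting $\mathrm{OPT}_u$ denote the optimum for $g_u$, this yields
\begin{equation*}
\mathrm{OPT} \;\le\; \sum_{u \in V} \Ex_{\sigma \sim \vecPhi^*}[g_u(T_\sigma \cap D_u')] \;\le\; \sum_{u \in V} \mathrm{OPT}_u \;\le\; n \cdot \max_{u \in V} \mathrm{OPT}_u.
\end{equation*}
Conversely, by monotonicity, the value of $\vecPhi_u$ under the original objective satisfies $\Ex_{\sigma \sim \vecPhi_u}[g(T_\sigma)] \ge \Ex_{\sigma \sim \vecPhi_u}[g_u(T_\sigma \cap D_u')] = \mathrm{OPT}_u$, so the best of the $n$ schemes attains value at least $\max_u \mathrm{OPT}_u \ge \mathrm{OPT}/n$.

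The only non-routine step is the invocation of the single-agent result from Section~\ref{sec:nonzero}; the rest is a direct subadditivity-plus-monotonicity argument combined with the fact that running $n$ polynomial-time algorithms is polynomial. Note also that subadditivity is applied to possibly overlapping singletons $T_\sigma \cap D_u'$ partitioning $T_\sigma$ by agent, which is where the factor $n$ in the loss is tight in general.
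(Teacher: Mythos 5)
Your proposal is correct and follows essentially the same route as the paper: compute, for each agent $u$, an optimal scheme for the restricted objective $\Ex[g(T_\sigma \cap D_u)]$ via the single-agent (rank-$1$ / constant-number-of-agents) machinery of Sections~\ref{sec:rank}--\ref{sec:nonzero}, then use subadditivity and monotonicity to conclude that the best of these $n$ schemes is an $n$-approximation. Your write-up is in fact more explicit than the paper's about the chain $\mathrm{OPT} \le \sum_u \mathrm{OPT}_u \le n \max_u \mathrm{OPT}_u$, but the argument is the same.
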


\begin{proof}
    Consider each agent $u \in V$ individually, and compute a signaling scheme $\vecPhi^{(u)}$ that maximizes $\Ex[g(T_{\sigma} \cap D_u)]$, i.e., the expected value of the ranges of $D_u$ that contain $z_{u,\sigma}$. To do this, we only need to consider the range combinations resulting from $D_u$ w.r.t.\ $z_{u,\sigma}$, which are at most $2 |D_u|$. By our arguments at the end of the last section, we can apply Theorem~\ref{thm:constantRank} to compute each $\vecPhi^{(u)}$ in polynomial time. Linearity of expectation and subadditivity of $g$ imply that the best of these $n$ schemes is an $n$-approximation.
\end{proof}

\begin{theorem}
\label{thm:np-hardness}
    For maximizing $f_{na}$, it is \classNP-hard to compute a signaling scheme that represents a $n^{1-\varepsilon}$-approximation, for any constant $\varepsilon > 0$.
\end{theorem}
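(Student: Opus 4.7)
I plan to reduce from \textsf{MAX-INDEPENDENT-SET}, exploiting the H{\aa}stad/Zuckerman result that for every constant $\eta > 0$ it is \classNP-hard to distinguish graphs $G$ with $\alpha(G) \ge n^{1-\eta}$ from graphs with $\alpha(G) \le n^{\eta}$. Given a graph $G = (V,E)$ with $|V| = n$, I will construct in polynomial time an FJ signaling game on $n$ agents whose optimal $f_{na}$-value is sandwiched between $\alpha(G)^2/n$ and $\alpha(G)$. This produces a multiplicative gap of $n^{1-3\eta}$ between YES- and NO-instances, and setting $\eta = \varepsilon/3$ then yields \classNP-hardness of $n^{1-\varepsilon}$-approximation for every constant $\varepsilon > 0$.

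\textbf{Construction.} Invoking Lemma~\ref{lem:equilib}, I specify the game directly in terms of equilibrium opinions, taking $\matD = \matI$ and $\matW = \bm{0}$. Both the agents and the states are indexed by $V$, the prior is uniform ($q_v = 1/n$), and
\[
z_{u,\theta_v} \;=\; \begin{cases} 1 & \text{if } u = v,\\ 0 & \text{if } \{u,v\} \in E,\\ 1/2 & \text{otherwise.} \end{cases}
\]
Each agent is given the single range $D_u = \{[\tfrac{1}{2} + \tfrac{1}{2n},\,1]\}$.

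\textbf{Key steps.} First, for any signal $\sigma$ with induced posterior $p = \vecx_\sigma/\|\vecx_\sigma\|_1$, a direct computation gives $z_{u,\sigma} = \tfrac{1}{2} + \tfrac{1}{2}\bigl(p_u - \sum_{v \in N_G(u)} p_v\bigr)$, so $u$ is in range iff $p_u \ge \tfrac{1}{n} + \sum_{v \in N_G(u)} p_v$. From this I deduce that the in-range set $T_\sigma$ of any signal is an independent set in $G$: if $\{u,v\} \in E$ and both are in range, the two defining inequalities force $p_u > p_v$ and $p_v > p_u$, a contradiction. Hence every signal contributes at most $\alpha(G)$ to the expectation, proving $\text{OPT} \le \alpha(G)$. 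Next, for the lower bound I fix a maximum independent set $I^*$ and use a two-signal scheme: $\sigma_1$ has posterior uniform on $I^*$ and is sent with probability $|I^*|/n$, while a residual signal $\sigma_0$ absorbs the remaining prior mass (posterior uniform on $V \setminus I^*$). This is a valid decomposition of $\vecq$, and every $u \in I^*$ satisfies $p_u = 1/|I^*| \ge 1/n$ and $N_G(u) \cap I^* = \emptyset$, so the entire $I^*$ is in range under $\sigma_1$. Therefore $\sigma_1$ alone contributes $|I^*|^2/n$, giving $\text{OPT} \ge \alpha(G)^2/n$. Plugging in the MAX-IS gap, YES-instances yield $\text{OPT} \ge n^{1-2\eta}$, NO-instances yield $\text{OPT} \le n^\eta$, hence a multiplicative ratio of $n^{1-3\eta}$.

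\textbf{Main obstacle.} The delicate design choice is the margin $\tfrac{1}{2n}$ in the range: it must be \emph{small} enough that $1/|I^*| \ge 1/n$ clears it (so every member of a max independent set actually lies in range under the uniform-on-$I^*$ posterior), yet \emph{large} enough to force strict inequalities $p_u > p_v$ on both endpoints of any edge (so the in-range set is genuinely independent, not merely sparse). A secondary subtlety is the $\alpha(G)^2/n$ factor in the lower bound --- because the uniform prior only gives each state mass $1/n$, a signal whose posterior concentrates on $I^*$ can be used with probability at most $|I^*|/n$. This loss is harmless here: the underlying MAX-IS hardness gap is already polynomial in $n$, so the rescaling $\eta \mapsto 3\eta$ costs nothing, and $\varepsilon > 0$ can be made arbitrarily small.
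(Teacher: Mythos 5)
Your proposal is correct and follows essentially the same route as the paper: a reduction from the gap version of \textsc{Independent Set} with one state per vertex, a uniform prior, an empty influence network, and a preconception matrix that is high on the diagonal, zero across edges, and neutral elsewhere, so that the in-range set of any signal is independent while any independent set $I$ is realizable with signal mass $|I|/n$, yielding the same $\alpha(G)^2/n$ versus $\alpha(G)$ sandwich and the $n^{1-3\eta}$ gap. The only differences are cosmetic — you place the $\Theta(1/n)$ margin in the range threshold rather than in the off-diagonal entries, and your posterior-based derivation of independence is a clean equivalent of the paper's case analysis.
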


\begin{proof}
    We reduce from the \textsc{Independent Set} problem. Given an undirected graph $H=(V,E_H)$, where $|V| = n$, 
    we construct an instance of FJ signaling with network $G=(V,\emptyset)$, state set $\Theta = \{\theta_u \mid u \in V\}$, prior $q_{\theta_u} = 1/n$ for all $\theta_u \in \Theta$, and preconception matrix $\matS$ as follows. The entries $\vecs_{u,\theta_u}=\frac{3}{4}$ for all $u\in V$. For each edge $\{u,v\}\in E_H$, $\vecs_{u,\theta_v}=\vecs_{v,\theta_u}=0$. All remaining entries of $\matS$ are $\frac{1}{2}-\frac{1}{8n}$. The desired range is $D_u = \{[0.5,1]\}$ for all $u \in V$. Note that $\matS = \matZ$ since $E = \emptyset$. Our construction is similar to~\cite[Observations 8.1 and 8.2]{cheng2015mixture}.
    
    Consider any independent set $\mathcal{I}$ of $H$. Let $k=|\mathcal{I}|$. There exists a signal $\sigma$ such that the corresponding entries of $\vecz_\sigma = \matZ \vecx_\sigma / \|\vecx_\sigma\|_1$ are at least $\frac{1}{2}+\frac{1}{8n}$, while all other entries are less than $\frac{1}{2}$. Formally, suppose $x_{\theta_u,\sigma} = \frac{1}{n}$ (i.e., $\varphi_{\theta_u,\sigma}=1$) for each $u\in \mathcal{I}$, and $x_{\theta_u, \sigma}=0$ (i.e., $\varphi_{\theta_u,\sigma}=0$) otherwise. If $u\in \mathcal{I}$, $\vecz_{u,\sigma} = \frac{1}{k}(\frac{3}{4} + (k-1)(\frac{1}{2}-\frac{1}{8n})\ge \frac{1}{2}+\frac{1}{8n}$, and $\vecz_{u,\sigma}\le \frac{1}{2}-\frac{1}{8n}$ otherwise.
    
    Conversely, for any $\sigma$, the entries of $\vecz_\sigma$ in the range $[0.5,1]$ must correspond to an independent set of $H$. For an edge $\{ u,v\}\in E_H$, w.l.o.g.\ assume that $x_{\theta_u,\sigma} \ge x_{\theta_v,\sigma}$. If $x_{\theta_u,\sigma}=0$, then $\vecz_{u,\sigma}\le \frac{1}{2}-\frac{1}{8n} < \frac{1}{2}$, else $\vecz_{v,\sigma}\le \frac{3\cdot x_{\theta_v,\sigma}}{4\cdot(x_{\theta_v,\sigma}+x_{\theta_u,\sigma})} < \frac{1}{2}$. $\vecz_{u,\sigma}$ and $\vecz_{v,\sigma}$ cannot \emph{both} be in $[0.5,1]$.
    
    It is \classNP-hard to approximate \textsc{Independent Set} to within a factor of $n^{1-\delta}$, for any constant $\delta > 0$. The hardness arises in distinguishing graphs with independent sets of size at least $n^{\delta/2}$ from ones where all independent sets have size at most $n^{(1-\delta)\delta/2}$~\cite[Theorem 1.1]{Zuckerman07}.
    
    If all independent sets have a size smaller than $n^{\delta/2}$, the expected amount of agents in their range is at most $n^{\delta/2}$, because the maximum amount of agents in their range is $n^{\delta/2}$, and the maximum probability of all signals is $1$. If at least one independent set $\mathcal{I}$ has size $n^{\delta/2} \cdot n^{1-\delta}$, a signal $\sigma$ exists, where $n^{\delta/2} \cdot n^{1-\delta}$ agents are in the desired range $[0.5,1]$. This signal has a probability of at least $\|\vecx_\sigma \|_1 \ge n^{\delta/2} n^{1-\delta}/n$ because for each $u \in \mathcal{I}$, in state $\theta_u$ the signal $\sigma$ is sent with $\varphi_{\theta_u,\sigma}=1$. The expected number of agents in their ranges is then at least $(n^{\delta/2} n^{1-\delta})^2/n$. This results in a lower bound of $\frac{(n^{\delta/2}n^{1-\delta})^2/n}{n^{\delta/2}} = n^{1-\frac{3}{2}\delta} = n^{1-\varepsilon}$, for any constant $\varepsilon > 0$. 
\end{proof}

\section{Extensions}
\label{sec:extensions}
In this section, we discuss the extension of our results in Sections~\ref{sec:model} and \ref{sec:simple} to variants of the classic FJ opinion formation model we have considered so far.

First, our results also apply to a multidimensional extension of the FJ model discussed by Parsegov et al.~\cite{parsegov2016novel}. They assume each agent $u$ has vector-valued preconceptions and public opinions $\vecs_u, \vecz^{t}_u \in \RR^q$. The elements of vector $\vecz^{t}_u = (z^{t}_{u,1}, z^{t}_{u,2},..., z^{t}_{u,q})$ correspond to the opinions of agent $u$ on $q > 0$ different \emph{issues}. The agents apply the FJ opinion formation process on these $q$ issues simultaneously, where the update on issue $i$ may be unrelated or interdependent of other issues $j \neq i$. It is easy to verify that this multidimensional setting can be reduced to the standard case by using an interpretation with \emph{auxiliary} agents, one for each pair $(u,h)$ of agent $u \in V$ and issue $h \in \{1,...,q\}$. All our results extend accordingly also to this more general scenario.

In our discussion above, we assumed that instances of FJ signaling satisfy standard convergence criteria. These criteria require a suitable set of agents to have $\lambda_u < 1$ and are usually violated if $\matLambda = \matI$. The latter scenario represents the classic French-DeGroot model. Here, the update~\eqref{eq:FJ_update} of the public opinion of each agent $u \in V$ in time step $t > 0$ is solely governed by the current public opinions of its neighbors. 

Public opinions become $\vecz^{(t+1)} = \matI\matA \vecz^{(t)} = \matW \vecz^{(t)} = (\matW)^{t+1}\vecs$. The analysis of this process has close connections to Markov chains, see, e.g., \cite[Chapter 3]{ProskunikovT2017}. Interestingly, if the network is strongly connected and aperiodic, all agents converge to a single consensus opinion. Formally, consensus is reached if and only if there exists a row-vector $\vecpi=(\pi_u)_{u \in V}$ s.t.\ $\matW^{\infty} = \lim_{t\rightarrow \infty} \matW^t$ exists and every row of $\matW^{\infty}$ is $\vecpi$. In this case $\matW^\infty = (\matI - \matW)^{-1}$, and $\pi_u \in [0,1]$ for all $u \in V$ and $\sum_{u} \pi_u = 1$. The consensus opinion adopted by all agents is $z = \vecpi \vecs$.

Similar to FJ signaling, we consider \emph{FDG signaling}, i.e., Bayesian persuasion in instances of the French-DeGroot model that converge to consensus. Notably, preconceptions in $\matS$ are now used only \emph{initially} as starting opinions for the convergence process (but they still govern the emerging consensus opinion). Public opinions in equilibrium reduce to a single number, the consensus opinion $z$. Let $\vecz = (z_{\theta})_{\theta \in \Theta}$ be the consensus opinions that evolve for full revelation signaling in the individual states, i.e., $\vecz = \vecpi \matS$. More generally, due to consensus, the full revelation matrix $\matZ = (\matI-\matW)^{-1}\matS$ has every row identical to $\vecz$ and, thus, rank 1. At this point, it is clear that the arguments from Theorems~\ref{thm:convex} and~\ref{thm:constantRank} can be applied to this scenario.

\begin{corollary}
    If $\calS$ wants to minimize a convex objective function $f(\vecz_\sigma)$ in FDG signaling, then no-signaling is an optimal scheme. If $\calS$ wants to maximize such a function, then full revelation is an optimal scheme.
\end{corollary}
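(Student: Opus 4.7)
The plan is to mirror the argument from the proof of Theorem~\ref{thm:convex}, since the algebraic structure that made it work carries over verbatim to the FDG setting. The first step is to verify the key identity: the consensus opinion vector $\vecz_\sigma$ induced by any signal $\sigma$ is a convex combination of the state-specific consensus vectors $\vecz_\theta$. In FDG signaling, the Bayes update of the starting opinions is still linear, giving $\vecs_\sigma = \matS \vecx_\sigma / \|\vecx_\sigma\|_1$, and the French-DeGroot convergence map $\matS \mapsto (\matI - \matW)^{-1}\matS = \matZ$ is linear as well. Composing these yields
\[
  \vecz_\sigma \;=\; \frac{\matZ \vecx_\sigma}{\|\vecx_\sigma\|_1} \;=\; \sum_{\theta \in \Theta} \frac{x_{\theta,\sigma}}{\|\vecx_\sigma\|_1}\, \vecz_\theta,
\]
a convex combination, since the weights $x_{\theta,\sigma}/\|\vecx_\sigma\|_1$ are nonnegative and sum to $1$. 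Note that $\matZ$ having rank $1$ (all rows equal to $\vecz$) plays no role here; the identity and its convex-combination interpretation are entirely rank-agnostic.

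Given this identity, the two halves of the corollary follow by applying Jensen's inequality in opposite directions, exactly as in Theorem~\ref{thm:convex}. For any scheme $\vecPhi$ and any convex $f$, the chain
\[
  \Ex[f(\vecz_\sigma)] = \sum_{\theta,\sigma} x_{\theta,\sigma}\, f(\vecz_\sigma) \;\le\; \sum_{\theta,\sigma} x_{\theta,\sigma} \sum_{\theta' \in \Theta} \frac{x_{\theta',\sigma}}{\|\vecx_\sigma\|_1}\, f(\vecz_{\theta'}) = \sum_{\theta \in \Theta} q_\theta\, f(\vecz_\theta)
\]
bounds $\Ex[f(\vecz_\sigma)]$ above by the expected value under full revelation, establishing the maximization direction. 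Pulling $f$ outside via Jensen in the other direction yields $\Ex[f(\vecz_\sigma)] \ge f(\matZ \vecq)$, which is exactly the value under no-signaling, establishing the minimization direction.

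There is no real obstacle: the argument is dimension- and rank-agnostic, and the only thing to verify is that $\matW^{\infty}$ exists so that $\matZ$ and each $\vecz_\theta$ are well defined. This is guaranteed by the standing convergence assumption (strong connectivity and aperiodicity) stated in the preamble to the corollary, so no additional work is required beyond spelling out the two lines above.
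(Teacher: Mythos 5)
Your proposal is correct and takes essentially the same route as the paper, which simply observes that the argument of Theorem~\ref{thm:convex} carries over because the Bayes update and the French--DeGroot convergence map are both linear, so $\vecz_\sigma$ remains a convex combination of the state-specific consensus vectors and Jensen's inequality applies in both directions. Your remark that the rank-$1$ structure of $\matZ$ is irrelevant here (it is only needed for the range-based corollary via Theorem~\ref{thm:constantRank}) is also accurate.
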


\begin{corollary}
    For FDG signaling, there is a polynomial-time algorithm to compute an optimal signaling scheme for every range-based objective.
\end{corollary}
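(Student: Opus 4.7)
The plan is to invoke Theorem~\ref{thm:constantRank} directly, using the observation that the full revelation matrix $\matZ$ has rank one in every FDG signaling instance. As noted in the paragraph preceding the corollary, consensus implies every row of $\matZ$ equals the same vector $\vecz = \vecpi\matS \in \RR^{1 \times m}$, so $\operatorname{rank}(\matZ) = 1 = O(1)$, placing us in the regime of Theorem~\ref{thm:constantRank}.

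First I would make the cell decomposition from the proof of Theorem~\ref{thm:constantRank} explicit at $d=1$. Use the trivial rank-one factorisation $\matZ = \matB \matC$ with $\matB = (1,\ldots,1)^\top \in \RR^n$ and $\matC = \vecz$. For any signaling scheme $\vecPhi$ and any signal $\sigma$, the low-dimensional posterior $\vecx_\sigma^{(b)} = \vecz \cdot \vecx_\sigma/\|\vecx_\sigma\|_1$ reduces to a scalar $z_\sigma \in [0,1]$ which coincides with the common consensus opinion of all agents under~$\sigma$. The $6k$ (in-)equalities induced by the $k$ desired ranges partition the one-dimensional space of consensus opinions into $O(k)$ cells, each being either a subinterval $[\ell_i,r_i]$ of $[0,1]$ or a single endpoint. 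On each cell $\calP_i$ both the set $T_i$ of ranges hit and the range-based objective value $f(\calP_i) = g(T_i)$ are constant, and shared boundary points between adjacent cells are resolved in favour of the cell with larger $f$-value, which is correct by monotonicity of~$g$.

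Next I would write the one-dimensional version of LP~\eqref{eq:LP1}. For every cell $i$, introduce a candidate signal $\sigma_i$ and variables $\varphi_{\theta,\sigma_i}$, $x_{\theta,\sigma_i} = \varphi_{\theta,\sigma_i} q_\theta$, $x_{\sigma_i} = \sum_\theta x_{\theta,\sigma_i}$, and a scalar $y_{\sigma_i} = \sum_\theta z_\theta \cdot x_{\theta,\sigma_i}$ (the $d=1$ substitute for the $y_{j,\sigma_i}$ in the general proof). Cell membership becomes the linearised pair $\ell_i \cdot x_{\sigma_i} \le y_{\sigma_i} \le r_i \cdot x_{\sigma_i}$; feasibility of $\vecPhi$ as a distribution is captured by $\sum_i \varphi_{\theta,\sigma_i} = 1$ for every $\theta \in \Theta$; and the objective is $\max \sum_i x_{\sigma_i} \cdot f(\calP_i)$. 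The resulting LP has $O(km)$ variables and constraints, hence polynomial size, and its optimal solution yields an optimal FDG signaling scheme.

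The main (minor) obstacle is conceptual rather than technical: one must translate between the ``single consensus opinion shared by all agents'' native to the FDG model and the ``vector of equilibrium opinions'' for which Theorem~\ref{thm:constantRank} is stated. Once one identifies $\matZ$ with its single distinct row $\vecz$ and treats $\vecz_\sigma$ as the scalar consensus, all remaining ingredients of the proof of Theorem~\ref{thm:constantRank}---reverse-search enumeration of cells, convexity of cells, the replacement of equalities by two-sided inequalities, and the $y = x^{(b)} \cdot x$ linearisation---carry over verbatim. This gives the desired polynomial-time algorithm for every range-based objective.
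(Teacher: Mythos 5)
Your proposal is correct and follows exactly the paper's intended route: the consensus property forces every row of the full revelation matrix $\matZ$ to equal $\vecpi\matS$, so $\matZ$ has rank $1$ and Theorem~\ref{thm:constantRank} applies directly. Your explicit $d=1$ instantiation of the cell decomposition and LP is a faithful (and somewhat more detailed) specialization of the argument the paper itself invokes.
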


\bibliographystyle{abbrv}
\bibliography{references}

\end{document}